\documentclass[a4paper,12pt]{article}
\usepackage{color}
\usepackage{epsfig}
\usepackage{epstopdf}
\usepackage{subfigure}
\usepackage{amssymb}
\graphicspath{{eps/}}
\usepackage{amsbsy}
\usepackage{amsmath}
\usepackage{amsthm}
\usepackage{color}
\usepackage{booktabs}
\usepackage{graphicx,adjustbox}
\usepackage{algorithm}
\usepackage{algorithmic}
\usepackage{hyperref}
\usepackage{url}
\usepackage[round]{natbib}
\usepackage{pgfplots}
\pgfplotsset{compat=1.14}
\usepackage{mathrsfs}
%% margin and spacing
\usepackage[margin=1in]{geometry}
\usepackage{setspace}
\doublespacing
\usepackage[title]{appendix}

\setstretch{1.3}

\long\def\symbolfootnote[#1]#2{\begingroup%
	\def\thefootnote{\fnsymbol{footnote}}\footnote[#1]{#2}\endgroup}
%%%%%%%%%%%%%%%%%%%%%%%%%%%%
%% command definitions    %%
%%%%%%%%%%%%%%%%%%%%%%%%%%%%

\newcommand*{\blue}{\textcolor{black}}

% \usepackage{lineno}
% \linenumbers

% Set the page margins to 1 inch all around:
\marginparwidth 0pt\marginparsep 0pt \topskip 0pt\headsep
0pt\headheight 0pt \oddsidemargin 0pt\evensidemargin 0pt \textwidth
6.5in \topmargin 0pt\textheight 9.0in

\usepackage{epsfig}

%\floatname{algorithm}{Procedure}

%\renewcommand{\theequation}{\thesection.\arabic{equation}}
%\newcommand\numberthis{\addtocounter{equation}{1}\tag{\theequation}}

\allowdisplaybreaks

%\journal{European Journal of Operational Research}
\begin{document}
%\begin{frontmatter}

\begin{center}
	
	\begin{spacing}{2}
		{\LARGE Scheduling multiple agile Earth observation satellites \\ with multiple observations}
	\end{spacing}
	
	\large{Xinwei Wang$^{1}$\symbolfootnote[1]{Corresponding author. E-mail: xinwei.wang@qmul.ac.uk (X. Wang), hanchao@buaa.edu.cn (C. Han), roel.leus@kuleuven.be (R. Leus).}, Chao Han$^{2}$ and Roel Leus$^3$
	}
	
	\bigskip
	
	\begin{spacing}{1}
		\normalsize
$^{1}$\blue{School of Engineering and Materials Science, Queen Mary University of London, 327 Mile End Road, London E1 4NS, UK} \\ 
$^{2}$\blue{School of Astronautics, Beihang University, 37 Xueyuan Road, Beijing 100191, China} \\ 
$^{3}$\blue{ORSTAT, KU Leuven, Naamsestraat 69, 3000 Leuven, Belgium}
	\end{spacing}
\end{center}

%Here is the abstract:
\footnotesize \noindent \textbf{Abstract:}
Earth observation satellites (EOSs) are specially designed to collect images according to user requirements. Agile EOSs (AEOSs), with stronger attitude maneuverability, greatly improve the observation capability, while increasing the complexity of scheduling the observations. In this paper, we address the problem of scheduling multiple AEOSs with multiple observations where the objective function aims to maximize the entire observation profit over a fixed horizon. The profit attained by multiple observations for each target is nonlinear in the number of observations. Our model is a specific interval scheduling problem, with each satellite orbit represented as a machine. A column-generation-based framework is developed for this problem, in which the pricing problems are solved using a label-setting algorithm. Extensive computational experiments are conducted on the basis of one of China's AEOS constellations.  The results indicate that our optimality gap is less than $3\%$ on average, which validates our framework. We also evaluate the performance of the framework for conventional EOS scheduling.

\bigskip

\noindent \textbf{Keywords:} agile Earth observation satellites, multiple observations, column generation heuristic

%\end{changemargin}

\normalsize

%\end{frontmatter}

\section{Introduction}
\label{sec:introduction}

Earth observation satellites (EOSs) with specialized cameras are designed to gather images according to user requirements. Broad applications of EOSs can be seen in the fields of Earth resource exploration, environmental monitoring and disaster surveillance, since EOSs provide a large-scale observation coverage. Due to the continuous decrease in launch cost and the improvement in small satellite technology, we are witnessing an explosive growth of the number of orbiting  EOSs and  planned launches, which can be expected to continue in the years to come~\citep{2014Nag}. The scheduling of the EOSs is of great importance for the effective and efficient execution of observation missions~\citep{wang2020agile}.

\blue{EOS scheduling refers to the problem of allocating multiple satellites, each constrained by limited resources such as energy and memory, to a set of observation tasks, with the objective of maximizing the total observation profit over the planning horizon. It involves selecting which targets to observe, at what times, and by which satellites, while respecting physical and operational constraints.}

\blue{Conventional EOS (CEOS) scheduling has been extensively studied~\citep{VasquezHao-255,LinLiao-60,wang2019robust}. In CEOS scheduling, a satellite adjusts only along the roll axis, so its observation time window (OTW) coincides with the visible time window (VTW), as illustrated in Figure~\ref{figEOS}. In contrast, agile EOSs (AEOSs) can maneuver along both the roll and pitch axes, providing additional flexibility to look ahead or look back within the VTW~\citep{LemaitreVerfaillie-269,wang2020agile,song2024generalized}. As shown in Figure~\ref{fig:AEOS-R1}, this enables multiple possible OTWs within a single VTW, significantly enhancing observation capability but also increasing the complexity of scheduling.}

\begin{figure}[tb]
\begin{center}
\includegraphics[width=0.4\textwidth]{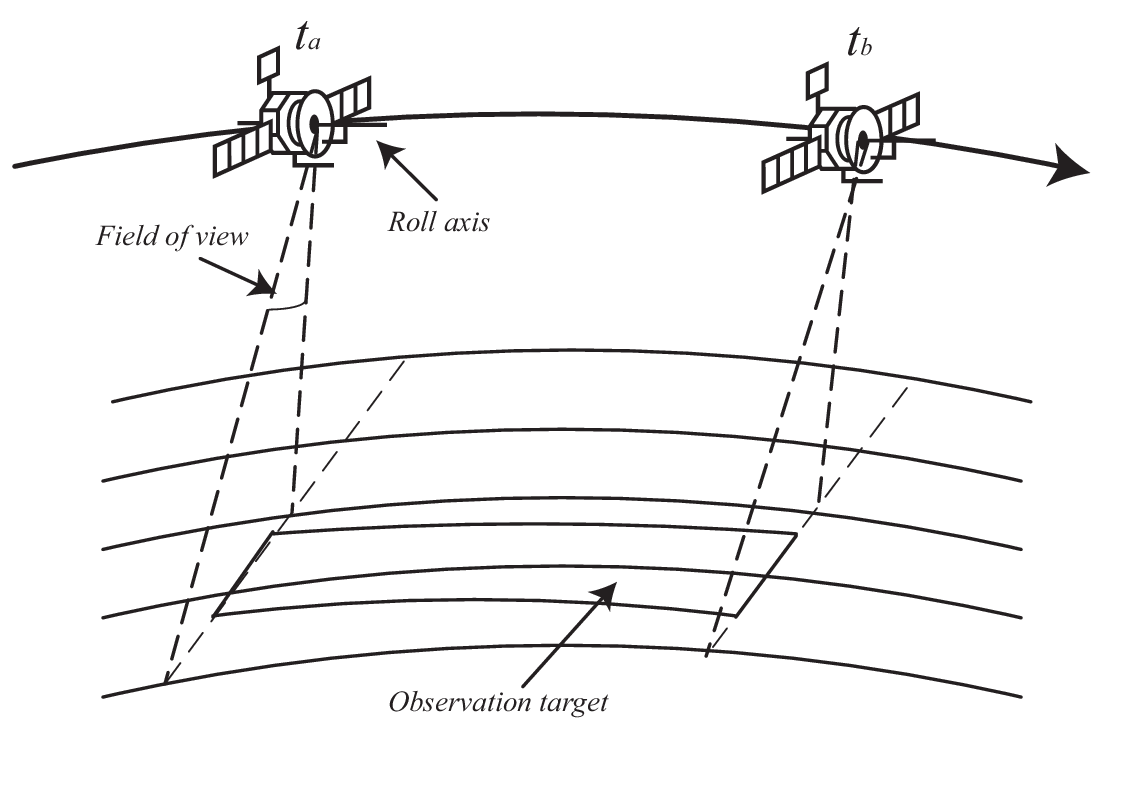}
%\includegraphics[width=2]{Figure1.pdf}
%\space{7cm}
 \caption{\blue{Illustration of the fixed observation interval of a CEOS\@. The CEOS can only maneuver along the roll axis, so that its OTW coincides exactly with the VTW.}}\label{figEOS}
\end{center}
\end{figure}

\begin{figure}[tb]
  \centering
  \includegraphics[width=4.0in]{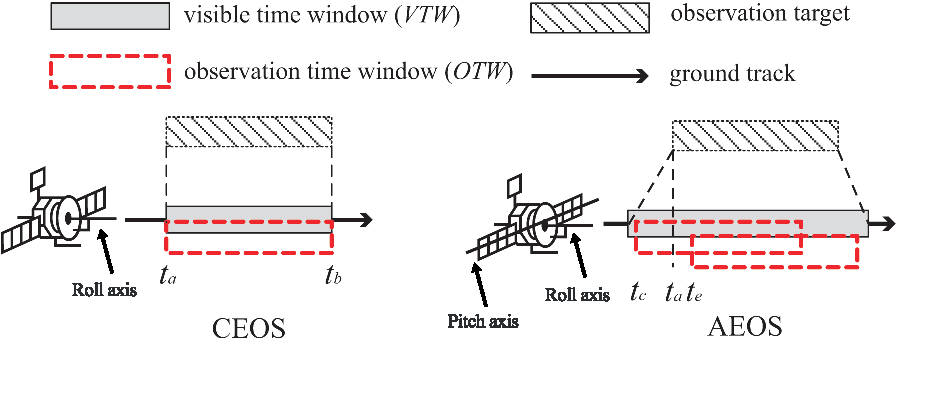}
  \caption{\blue{Comparison of the observation capability of CEOS and AEOS\@. 
Unlike a CEOS, which can only observe within its VTW, an AEOS has additional pitch-axis maneuverability that allows it to look ahead or look back. 
This enables multiple potential OTWs within the same VTW, thereby increasing flexibility and observation opportunities.}}
  \label{fig:AEOS-R1}
\end{figure}

\blue{In this paper, we focus on a variant of the AEOS scheduling problem that allows for multiple observations of the same target, motivated by the need for stereo and time-series imaging~\citep{NicholShaker-105, StearnsHamilton-106,li2024earth}. We define this problem as \emph{multi-observation AEOS scheduling} (MAS)\@. In MAS, each VTW may contain multiple candidate OTWs, and the profit associated with observing a target is a nonlinear function of the number of successful observations. The problem can be regarded as a specific interval scheduling problem in a parallel machine environment, with each orbit of each satellite represented as a machine, and subject to additional constraints such as mission transformation time and on-board memory and energy consumption.}

{The main contributions of this paper are threefold: (1) to the best of our knowledge, scheduling multiple AEOSs with multiple observations has not been studied before; we define the problem MAS with a nonlinear profit function. (2) Since MAS cannot be effectively handled by a general commercial solver in a straightforward manner, we propose a column-generation-based framework based on a reformulation of a compact linear formulation. (3)~We also test the proposed framework for CEOS scheduling and report our findings.}

\blue{The remainder of this paper is organized as follows. Section~\ref{sec:literature} reviews related work on CEOS scheduling and AEOS scheduling. Section~\ref{model} presents the modeling of the MAS problem. Section~\ref{sec:CG} describes the proposed column-generation (CG) framework. Section~\ref{sec:comput} reports computational experiments, and Section~\ref{sec:conclusion} concludes the paper with a discussion of future research directions.}

\section{Literature review}
\label{sec:literature}

\subsection{EOS scheduling}
% In this subsection, we provide a survey of recent work on the related subjects of CEOS and AEOS scheduling. %The models and methods of CEOS scheduling are reported separately.

\cite{GabrelVanderpooten-61} formulate a CEOS scheduling problem  as the selection of a path optimizing multiple criteria in a graph without circuit. Based on the EOS SPOT~5~\citep{dagras1995spot}, \cite{VasquezHao-255} present a formulation as a generalized version of the well-known knapsack model, which includes large numbers of logical constraints. Constraint satisfaction procedures have also been introduced for CEOS scheduling with a set of hard constraints~\citep{Bensana96exactand,Sun2008}. \cite{LinLiao-60} study the daily imaging CEOS scheduling for the EOS ROCSAT-\uppercase\expandafter{\romannumeral2}~\citep{chern2008taiwan}, and develop an integer programming (IP) model with different imaging rewards. For the case where potentially hundreds of orbiting satellites are used to execute observation missions, a window-constrained packing model for CEOS scheduling is established by~\cite{WolfeSorensen-272}, and their work also considers an extended model with multiple resources and multiple observation opportunities for the same target. \cite{wang2019robust} further extend the foregoing models by consideration of uncertainty of cloud coverage and real-time scheduling.

Optimal solutions for CEOS scheduling problems can be found for small instances. \cite{GabrelVanderpooten-61} solve CEOS  scheduling problems with a single satellite using a graph-theoretical procedure. \cite{Bensana96exactand} propose a depth-first branch-and-bound (B\&B) algorithm with  constraint satisfaction ingredients, requiring limited space. Upper bounds for CEOS scheduling problems have been studied in~\cite{Benoist-104} and \cite{VasquezHao-271}. Since exact methods may fail to find optimal solutions in reasonable runtimes, approximate algorithms are viable practical alternatives for CEOS scheduling, especially for large instances with multiple satellites. Various versions of genetic algorithms have been developed for CEOS scheduling \citep{Han-162,KimChang-103,2013Kolici,zhang2022improved}, and heuristics have also been widely applied~\citep{LemaitreVerfaillie-269,2013Kolici,wu2022ensemble}. %Other heuristic methods have also been studied; \cite{WolfeSorensen-272}, for instance, propose a constructive method with fast and simple priority dispatch. \cite{LinLiao-60} describe a Lagrangian relaxation heuristic to obtain near-optimal solutions. \cite{WangReinelt-50} develop a decision support system considering mission conflicts.
\subsection{AEOS scheduling}

The complexity of agile satellite scheduling is significantly higher than for the conventional case, and \blue{a wide range of algorithms have been developed to solve AEOS scheduling problems~\citep{wang2020agile}. These approaches can be broadly grouped into three categories: exact methods, heuristics and metaheuristics, and reinforcement and deep learning methods. In the following, we review representative studies in each category and discuss their applicability to AEOS scheduling.}
\blue{\paragraph{Exact methods.} 
While the literature on exact optimization approaches for AEOS scheduling is limited, some works have explored mathematical programming formulations for small-scale problems~\citep{GabrelMoulet-258,chu2017anytime,peng2020exact}. These methods provide optimality guarantees but are typically restricted to small or simplified instances (e.g., not considering onboard memory and energy constraints) due to exponential computational complexity.}
\blue{\paragraph{Heuristics and metaheuristics.} 
A large body of work has developed heuristic and metaheuristic approaches for AEOS scheduling since the problem was explicitly introduced in the early 21st century. \citet{LemaitreVerfaillie-269} design four simple heuristic algorithms for the AEOS problem. Considering stereoscopic and visibility constraints, \citet{HabetVasquez52} formulate AEOS scheduling as a constrained optimization problem and propose a tabu search method with partial enumeration. \citet{TangpattanakulJozefowiez10} adopt a local search heuristic and compare it with a biased random-key genetic algorithm. \citet{XuChen-5} define priority-based indicators and employ a sequential construction procedure to generate feasible solutions. \citet{WangChen-7} model AEOS scheduling using a directed acyclic graph, and propose a heuristic graph-scanning algorithm. CG-based and simulated annealing heuristics have  also been proposed to handle cloud uncertainty~\citep{wang2021robust,han2022simulated}. Adaptive large neighbourhood search metaheuristics have also been developed, incorporating multiple removal and insertion operators~\citep{liu2017adaptive,he2018improved}.} %Beyond single-satellite problems, limited attention has been paid to multi-AEOS scheduling. For example, \citet{2003Globus} considered integrated scheduling of two AEOSs and transmission operations, evaluating several search techniques including simulated annealing, random swap mutations, and priority ordering. \citet{2007Li} presented a combined genetic algorithm, where the fitness computation was conducted via simulated annealing.}
\blue{More recently, \citet{long2024improved} have proposed an improved particle swarm optimization with reverse learning and neighbor adjustment, while \citet{long2024emergency} develop an event-triggered emergency scheduling method with multi-hierarchical planning.} 
\blue{\paragraph{Learning-based methods.} 
In recent years, reinforcement learning (RL) and deep learning approaches have gained significant traction, offering scalable solutions for large-scale task allocation and complex operational constraints~\citep{miralles2023critical}. For instance, \citet{herrmann2023reinforcement} propose a deep RL framework that integrates Monte Carlo Tree Search and supervised learning to optimize on-board scheduling, achieving improved observation efficiency and robustness. \citet{song2023rl} design a hybrid optimization framework combining Q-learning with a genetic algorithm, while a generalized model and a deep reinforcement learning-based evolutionary method are further developed for multitype satellite observation scheduling \citep{song2024generalized}. \citet{herrmann2024single} demonstrate how deep RL can enable scalable and efficient AEOS scheduling by training a single-agent policy in a simplified environment and deploying it across multiple satellites. More recently, \citet{chen2025conflict} have proposed a two-stage deep network optimization model that balances exploration and exploitation across satellite constellations.}
\blue{\paragraph{Discussion.} 
In summary, exact methods provide strong optimality guarantees but are computationally limited to small problem instances. Heuristic and metaheuristic approaches are widely adopted due to their efficiency and flexibility, but they often lack performance guarantees. Reinforcement and deep learning methods offer scalability and adaptability, but they require significant training data and may face interpretability challenges. Our CG-based framework lies between exact and approximate paradigms, thereby producing high-quality solutions that remain computationally tractable for realistic AEOS instances.}
\blue{It is also worth noting} that the previous models do not allow for more than one observation mission within one $VTW$, and multiple observations for the same target have, as far as we are aware,  not yet been studied for AEOS scheduling. In addition, incorporating multiple observations for multiple agile satellites and producing solutions with a performance guarantee on the optimality gap are still unexplored subjects in the literature; these are exactly the topics that are studied in this article.

\section{Definitions and problem statement}
\label{model}

\blue{Before presenting the detailed mathematical formulation of the MAS problem, we provide an overview of the modeling process. As shown in Figure~\ref{fig:Section3}, the main steps include: problem setup (targets, satellites, orbital data, and VTWs), assumptions and observation rules (e.g., each satellite can observe only one target at a time, no preemption, and resource limits), discretization of VTWs into candidate OTWs, generation of candidate observation missions with attributes such as duration, energy, memory, and profit, and finally the compact MAS modeling framework with decision variables, nonlinear objective function, and capacity constraints.} 
\begin{figure}[!th]
  \centering
  \includegraphics[width=2in]{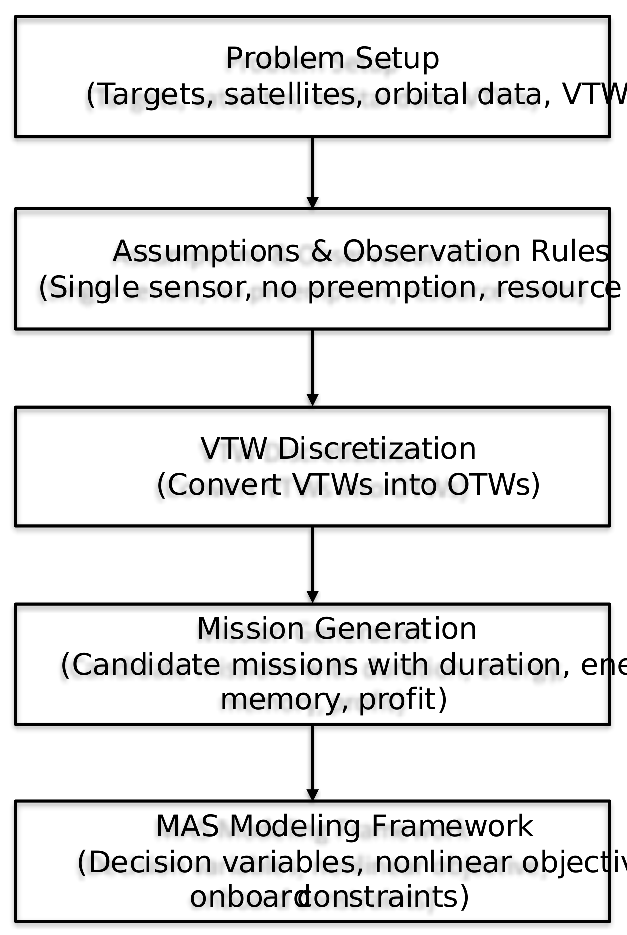}
  \caption{\blue{Overview of the MAS modeling process.
}}
  \label{fig:Section3}
\end{figure}

In Section~\ref{subsec:basicmodel}, we present the basic model MAS, with multiple observations and multiple
agile satellites. \blue{Each satellite is assumed to observe only one target at a time, and observation preemption is not allowed. This is consistent with the operational constraints of Earth observation satellites currently in use. Even when multiple sensors are available, only one can normally be used at a time due to attitude and energy consumption limitations. This assumption has also been widely adopted in the literature~\citep[see, for instance,][]{wang2020agile,chatterjee2024multi,stephenson2025optimal}.}  We will assume that each satellite can only observe one target at a given time, and that observation preemption is not allowed. %Besides, each target is possibly desired to be observed more than once.
Subsequently, in Section~\ref{subsec:extendedmodel}, we extend the basic formulation with practical constraints regarding mission transformation, energy consumption, and memory capacity.

\subsection{MAS} \label{subsec:basicmodel}
Denote $T$ as the set of targets and $\omega_{i}$ as the profit for target $i \in T$.
% As similar profit definition introduced in~\cite{LemaitreVerfaillie-269,Cordeau2005} for the area target, marginal benefit of an extra observation increases for the same target.
In line with the definitions of observation profits of \cite{LemaitreVerfaillie-269} and \cite{Cordeau2005}, we assume that the marginal benefit of an extra observation increases with the number of observations for each target~$i$, up until a maximum desired observation number $N_i$ within the time horizon (although this property is not essential to the model).
%We prefer to schedule enough observation missions for the same target, or abandon partial observations remaining opportunities to observe other targets.
The observation profit~$\omega_{i}$ for target $i$ is a nonlinear function of the observation count, which can be written in linearized form as $\omega_{i} = \sum_{s=0}^{s=N_{i}} \pi_{is}  y_{is} $ with $\sum_{s=0}^{s=N_{i}} y_{is} = 1$, where all $y_{is}$ are binary variables ($i\in T$, $s=0,\ldots,N_i$).  When
 $s$~observations are scheduled for target $i$ then
 $\pi_{is}$ is the observation profit (a nonnegative integer) and % and $N_{i}$ is the user-desired observation number within time horizon.
$y_{is}=1$; otherwise $y_{is}=0$. We let $y_{iN_{i}}=1$ when the number of missions for target $i$ is greater than $N_{i}$, since no additional profit would be received. An illustration of the profit function of a target $i$ is provided in Figure~\ref{PieceProFun}, with the maximum number of missions $N_{i} = 4$, and $\pi_{i0}=0$, $\pi_{i1}=1$, $\pi_{i2}=3$, $\pi_{i3}=6$, and $\pi_{i4}=10$.

%\begin{align}
%&\omega_{i}(l) = \sum_{s=0}^{s=N_{i}} \pi_{is}  y_{is} & \forall i \in T\\
%\text{with} \quad &\sum_{s=0}^{s=N_{i}} y_{is} = 1& \forall i \in T
%\end{align}

\begin{figure}[!t]
\begin{center}
\begin{tikzpicture}
\begin{axis}[
xlabel={Number of observations for target $i$},ylabel={Observation profit of target $i$},
xmin=-0.1,xmax=4.3,ymin=-0.5,ymax=11,
ytick={0,1,3,6,10}
]
\draw[dotted,blue,line width=0.5 mm] (0,0)--(1,0)--(1,1)--(2,1)--(2,3)--(3,3)--(3,6)--(4,6)--(4,10);
\draw[red,fill=red] (0,0) circle (0.5ex);
\draw[red,fill=red] (1,1) circle (0.5ex);
\draw[red,fill=red] (2,3) circle (0.5ex);
\draw[red,fill=red] (3,6) circle (0.5ex);
\draw[red,fill=red] (4,10) circle (0.5ex);
\end{axis}
\end{tikzpicture}
\caption{\blue{Illustration of the profit function for a single target $i$ with up to four observations ($N_i=4$). The profit values ($\pi_{i0}=0$, $\pi_{i1}=1$, $\pi_{i2}=3$, $\pi_{i3}=6$, $\pi_{i4}=10$) are illustrative examples to demonstrate the nonlinear increase.}}
\label{PieceProFun}
\end{center}
\end{figure}

Define $S$ as the set of satellites, and %. Since the AEOS can observe targets in different orbits,
let $B_{j}$ represent the set of orbits for satellite $j \in~S$. The set of candidate observation missions in orbit $k \in B_{j}$ is defined as $M_{jk}$. Each candidate observation mission $p\in M_{jk}$ is expressed as a pair $(OTW_{jkp}, i)$, indicating that each observation mission is associated with a corresponding target $i$ and a specific $OTW$.

We will consider sequence-dependent constraints concerning mission transformation and energy in Section~\ref{subsec:extendedmodel}. To ensure that these constraints can be easily incorporated into the model, the binary decision variable $x_{jkpq}$ is adopted, with $x_{jkpq}=1$ when observation mission~$q$ is the immediate successor of $p$ in orbit $k\in B_{j}$ and $x_{jkpq}=0$ otherwise. For each satellite orbit, we add dummy missions $s_{jk}$ and $e_{jk}$ as source and sink node, respectively. The number of observations scheduled for target $i$ is then expressed as $\sum\limits_{j\in{S}}\sum\limits_{k\in{B_{j}}}\sum\limits_{p\in{M_{jk}^{i}}} \sum\limits_{q\in{M_{jk}\cup{\{e_{jk}\}}}} x_{jkpq}$, where $M_{jk}^{i}$ stands for the observation mission set associated with target $i$ in orbit $k\in B_{j}$. A compact linear formulation of MAS can then be stated as follows:
\begin{align}
\text{max}\quad&  \sum\limits_{i\in{T}} \sum_{s=1}^{s=N_{i}} \pi_{is} y_{is}                                               \label{NewObjFunc}\\
  \text{subject to} &\sum\limits_{q\in{M_{jk\cup e}}} x_{jkpq} - \sum\limits_{q\in{M_{jk\cup s}}} x_{jkqp} =\begin{cases}
             1, &p=s_{jk}    \\
             0, &\forall{p\in{M_{jk}}}\\
         -1,& p=e_{jk}
             \end{cases} & \forall k\in{B_{j}},j\in{S}
              \label{FlowCon0}  \\
 &  \sum\limits_{j\in{S}} \sum\limits_{k\in{B_{j}}} \sum\limits_{{p\in{M_{jk}^{i}}}}\sum\limits_{q\in{M_{jk\cup e}}} x_{jkpq} \geq \sum_{s=0}^{s=N_{i}} s \cdot y_{is} & \forall{i}\in{T}
 \label{OSDYCons}  \\
 &  \sum_{s=0}^{s=N_{i}} y_{is} = 1  & \forall {i}\in{T}    \label{YCons0}
\end{align}
where $M_{jk\cup s} = M_{jk} \cup \{ s_{jk}\}$ and $M_{jk\cup e} = M_{jk} \cup \{ e_{jk}\}$.

The objective function~\eqref{NewObjFunc} aims to maximize the observation profit for the targets. \blue{Note that when $s=0$, the corresponding profit is zero ($\pi_{i0} = 0$). For this reason, the summation of the observation profits in~\eqref{NewObjFunc} can start from $s=1$, as the term $s=0$  does not contribute to the objective value.} The constraints~\eqref{FlowCon0} represent flow generation and conservation. Constraint set~\eqref{OSDYCons} guarantees that the profits are computed correctly according to the number of scheduled observations, and constraints~\eqref{YCons0} ensure that each target receives exactly one profit value.

\subsection{Extensions} \label{subsec:extendedmodel}
%In this section, we extend the basic formulation considering mission transformation, energy consumption and memory capacity.
%The data transmission is not considered since we assume there are enough ground transmission stations and relay satellites to download data in each satellite orbit.

The transformation time $\Delta_{jkpq}^{T}$ between two observation missions $p$ and $q$ in orbit $k\in B_{j}$ consists of attitude maneuvering time $\Delta_{jkpq}^{V}$ and attitude stabilization time $\Delta_{jkpq}^{S}$, which are given as inputs~\citep{wertz1978spacecraft}. Since the orbital duration of an AEOS is much larger than the transformation time between missions, the transformation constraint for two observation missions in different orbits is always satisfied, and is not imposed separately.  This allows to partition each satellite's time horizon into independent orbits, which greatly reduces the number of constraints. The transformation constraints are checked in a preprocessing stage: decision variable $x_{jkpq}$ is defined only if the sum of the completion time of mission $p$ and $\Delta_{jkpq}^{T}$ is not greater than the starting time of mission~$q$ in orbit $k\in B_{j}$.

The memory capacity in one orbit of satellite $j$ is defined as $M^{C}_j$, and the unit-time imaging memory occupation is denoted as $M^{I}_j$. The memory capacity constraints are formulated per orbit, since we assume that satellites can transfer data to the ground station after each orbit. The energy system of an AEOS is typically partially supported by a solar panel collecting energy from the Sun. Although the  conditions for solar energy collection are variable due to environmental variation, the amount of energy collection in one orbit is nearly constant~\citep{WangDemeulemeester-4}. We therefore assume that the maximal energy capacity of satellite $j$ is constant and is denoted as $E^{C}_j$ in each orbit. Unit-time imaging energy consumption and maneuvering energy consumption are defined as $E^{I}_j$ and $E^{M}_j$ for satellite $j$, respectively. The number of observation missions in each orbit is limited to satisfy memory and energy constraints.

The extended formulation for MAS is then to maximize~\eqref{NewObjFunc} subject to~\eqref{FlowCon0}--\eqref{YCons0} and
 %\begingroup \makeatletter \def\f@size{8}\check@mathfonts
 \begin{align}
&\sum\limits_{{p\in{M_{jk}}}}\sum\limits_{q\in{M_{jk\cup e}}} x_{jkpq} d_{jkp} M^{I}_{j} \leq{M^{C}_j} & \forall{k}\in{B_{j}}, {j}\in{S} % Memory Storage Constraints
\label{NewMemoryCon}
\\
&\sum\limits_{{p\in{M_{jk}}}}\sum\limits_{q\in{M_{jk\cup e}}} x_{jkpq}d_{jkp} E^{I}_{j} +\sum\limits_{{p\in{M_{jk}}}}\sum\limits_{{q\in{M_{jk}}}} x_{jkpq}\Delta_{jkpq}^{V} E^{M}_{j} \leq{E^{C}_j} &\forall{k}\in{B_{j}},{j}\in{S}
\label{NewEnergyCon}
 \end{align}
% \endgroup
\noindent where $d_{jkp}$ is the observation duration of $OTW_{jkp}$. % Clearly, these knapsack-type constraints imply that the problem is NP-hard.

\blue{The model explicitly accounts for satellite memory and energy capacities, and includes sequence-dependent transition times (maneuver and stabilization). Other operational factors, such as cloud uncertainty and telemetry, tracking, and command constraints, are not considered in this study and are left for future work.} 
\blue{With the inclusion of the resource constraints (memory and energy), the problem MAS is equivalent to a multi-dimensional knapsack problem and is therefore clearly NP-hard.}

\section{Column generation} \label{sec:CG}

CG is a promising method to tackle formulations with a huge number of variables~\citep{Bar1998,Wilhelm2001,GSCHWIND2018521}. This technique has been used for decades since CG was first applied to the cutting stock problem as part of an efficient heuristic algorithm~\citep{1961Gilmore,1963Gilmore}. The main advantage of CG is that not all variables need to be explicitly included into the model. In this section, we decompose the linear relaxation of the compact formulation of Section~\ref{model}, leading to a CG-based solution framework for the LP-relaxation.  We design a column initialization heuristic, and iteratively solve a restricted master problem (RMP) with restricted column set by a commercial LP-solver. We apply a label-setting method to find solutions to the pricing problems in the CG-procedure, thus iteratively adding new columns until an optimal solution of the RMP is found. The integrality constraints are then re-imposed using all the generated columns to obtain a high-quality solution with an IP-solver. In light of the large number of variables for MAS in practice, in the corresponding B\&B routine we do not generate new columns for the new LP-problems encountered upon branching because this would become overly time-consuming. In other words, we apply a CG-heuristic and we do not implement a branch-and-price procedure. Similar choices are frequently made in the literature, for instance in~\citet{FURINI2012251} and \citet{GUEDES2015361}. We will show in Section~\ref{sec:comput} that the tight LP-bound confirms that a near-optimal integer solution is usually obtained in this way.

\blue{In the introduced CG-based framework, the RMP is a linear program, solvable in polynomial time as a function of the instance size (provided that the number of columns does not grow exponentially). The pricing subproblems to be introduced later are resource-constrained shortest path problems, for which the worst-case complexity is exponential with the number of vertices in the graph, knowing that the resource-constrained shortest path problem is NP-hard even with one resource \citep{MehlhornZiegelmann}. In practice, however, the number of dominant paths may be relatively low thanks to the resource constraints and dominance criteria \citep{GabrelVanderpooten-61}, and the entire algorithm empirically turns out to be a computationally viable alternative, as demonstrated in Section~\ref{sec:comput}  through computational experiments.}

\subsection{Dantzig-Wolfe decomposition}

The flow-based formulation of Section~\ref{model} has been used to design heuristic and constructive algorithms~\citep{LinLiao-60,liu2017adaptive}, but it is difficult to evaluate the optimality gap for such procedures. Exact methods, on the other hand, fail to obtain optimal solutions for large instances~\citep{GabrelVanderpooten-61}. Since the constraints are decoupled into different orbits, we apply Dantzig-Wolfe reformulation \citep{martin2012large} to decompose the original model.
\newtheorem{defn}{Definition}
\newtheorem{prop}{Property}

The schedules in each satellite orbit are regarded as columns. Denote the set of schedules in satellite orbit $k\in B_{j}$ as $R_{jk}$. Each schedule $m\in R_{jk}$ contains values $x_{jkpq}^{m}$, where $x_{jkpq}^{m}=1$ if mission $p$ is the immediate predecessor of mission $q$ according to schedule $m$ in orbit $k\in \blue{B}_{j}$; $x_{jkpq}^{m}=0$ otherwise. We introduce a binary variable $z_{jkm}$ for each $m\in R_{jk}$ such that $z_{jkm}=1$ when schedule $m$ is chosen and 0 otherwise. The master problem then aims to maximize~\eqref{NewObjFunc} subject to~\eqref{YCons0} and
 \begin{align}
&\sum\limits_{j\in{S}}\sum\limits_{k\in{B_{j}}} \sum\limits_{m\in{R_{jk}}}\sum\limits_{{p\in{M_{jk}^{i}}}}\sum\limits_{q\in{M_{jk\cup e}}} x_{jkpq}^{m}  z_{jkm} \geq \sum_{s=1}^{s=N_{i}} s \cdot y_{is} & \forall{i}\in{T}
\label{SDYCons2}
\\
&\sum\limits_{m\in{R_{jk}}} z_{jkm} = 1& \forall k\in{B_{j}},{j}\in{S}
\label{ZCons}
\end{align}
\noindent where constraints~\eqref{SDYCons2} correspond with~\eqref{OSDYCons} in the original flow formulation and constraints~\eqref{ZCons} ensure that only one orbit schedule is selected for each satellite orbit.
For each orbit, new columns are iteratively generated according to the results of the pricing problems detailed in Section~\ref{subsec:pricing}. % Eventually we will use the existing columns to solve the master problem until there is no new column generated by the pricing.

\subsection{The pricing problems} \label{subsec:pricing}

We remove the integrality constraints to obtain the LP relaxation of the master problem. With dual variables ${\theta}_{i}^{1}$ associated with constraints~\eqref{SDYCons2}, variables ${\theta}_{i}^{2}$ with constraints~\eqref{YCons0}, and variables~${\theta}_{jk}^{3}$ with constraints~\eqref{ZCons}, we have the following dual:
\begin{align}
\text{min}\quad & \sum\limits_{i\in{T}}{\theta}^{2}_{i} +  \sum\limits_{j\in{S}}\sum\limits_{k\in{B_{j}}} {\theta}^{3}_{jk}
\label{DualObjFunc}\\
\text{s.t.}&-s  {\theta}^{1}_{i} + {\theta}^{2}_{i}  \geq \pi_{is} & \forall{s}\in\{1, 2, \ldots, N_{i}\}, {i}\in{T}
\label{DualC1}\\
&\sum\limits_{i\in{T}}{\theta}^{1}_{i} \sum\limits_{{p\in{M_{jk}^{i}}}}\sum\limits_{q\in{M_{jk\cup e}}} x_{jkpq}^{m} +  {\theta}^{3}_{jk} \geq 0& \forall m \in {R_{jk}},k\in{B_{j}},{j}\in{S}
\label{DualC2}\\
&{\theta}^{1}_{i}\leq 0,{\theta}^{2}_{i} \in \mathbb{R}& \forall{{i}\in{T}}
\label{DualV1}\\
&{\theta}^{3}_{jk}\in \mathbb{R}& \forall k\in{B_{j}},{j}\in{S}
\label{DualV3}
\end{align}

At each CG-iteration we check the violation of constraints~\eqref{DualC2}. The LP relaxation is typically solved faster if we add constraints that are strongly violated, and we will search for columns with most negative reduced cost. Considering the special structure of the dual formulation, %the corresponding pricing problem is decoupled into several pricing problems according to the satellite orbits.
we end up with a separate pricing problem for each satellite orbit $k\in B_{j}$, as follows:

\begin{align}
\text{min}\quad & \sum\limits_{i\in{T}}{\theta}^{1}_{i} \sum\limits_{p\in{M_{jk}^{i}}} \sum\limits_{q\in{M_{jk\cup e}}}  x_{jkpq}  +  {\theta}^{3}_{jk}
\label{PriObjFunc} \\
\text{s.t.}&\sum\limits_{q\in{M_{jk\cup e}}} x_{jkpq} - \sum\limits_{q\in{M_{jk\cup s}}} x_{jkqp} =
\begin{cases}
             1,& p=s_{jk}    \\
             0,& \forall{p\in{M_{jk}}}\\
         -1,& p=e_{jk}
             \end{cases}
              %Observation Frequency Constraints
\label{FlowConCG}
\\
&\sum\limits_{{p\in{M_{jk}}}}\sum\limits_{q\in{M_{jk\cup e}}} x_{jkpq} d_{jkp} M^{I}_{j} \leq{M^{C}_j}% Memory Storage Constraints
\label{NewMemoryConCG}
\\
&\sum\limits_{{p\in{M_{jk}}}}\sum\limits_{q\in{M_{jk\cup e}}} x_{jkpq} d_{jkp} E^{I}_{j} + \sum\limits_{{p\in{M_{jk}}}}\sum\limits_{{q\in{M_{jk}}}} x_{jkpq} \Delta_{jkpq}^{V} E^{M}_{j} \leq{E^{C}_j} % Energy Constraints
\label{NewEnergyConCG}
\end{align}

For each pricing problem, if the optimal value is less than 0 then a new column for the corresponding orbit is generated with lowest reduced cost; otherwise no new column results. The CG-iterations continue until all pricing problems return an optimal value not less than 0, demonstrating that there are no violated constraints~\eqref{DualC2}.  The corresponding LP objective value constitutes a tight upper bound for problem \blue{MAS}.% The tight property of the upper bound is revealed in the computational experiments.

\subsection{Column initialization}

Our column initialization heuristic proceeds as follows. For each satellite orbit, the algorithm attempts to generate initial columns for a given number of iterations. In each iteration, the observation missions are re-ordered randomly and it is checked whether they can be greedily added into the current schedule. Since different orbit schedules may have different contributions to the overall profit, we introduce the following definition. % of column dominance as well as the related property to further improve the effectiveness of column initialization.

%\begin{defn} Dominant observation strategy. An observation strategy is named dominant if we cannot add more observation missions into the observation strategy while considering constraints.  \end{defn}

\begin{defn} Column dominance. For a satellite orbit, column dominance occurs when the number of scheduled observations for any target in one (dominant) column is no less than the observation number of the corresponding target in another (dominated) column, and for at least one target, the observation number of the dominant column is strictly greater than in the dominated column.
%the corresponding number in the other column.
\end{defn}

\begin{prop} If one column is dominated by the existing columns, it can be discarded without %missing the optimal value of
deteriorating the objective function.\end{prop}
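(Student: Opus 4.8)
The plan is to work directly with the (LP- or integer-) master problem and to show that whenever a feasible solution puts positive weight on a dominated column, that weight can be shifted onto a dominating column without changing the objective and without losing feasibility. Since the master objective \eqref{NewObjFunc} involves only the $y$-variables, it is enough to check that the $z$-part of the solution stays feasible while $y$ is left untouched; removing the dominated column from the pool can then only keep the optimal value where it was (it can never raise it), i.e.\ it does not deteriorate the objective.

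First I would fix notation: for a column $m\in R_{jk}$ and a target $i$ write
\[
a_{jkm}^{i}\;=\;\sum_{p\in M_{jk}^{i}}\sum_{q\in M_{jk\cup e}} x_{jkpq}^{m},
\]
the number of observations of target $i$ performed by schedule $m$. With this notation constraints \eqref{SDYCons2} read $\sum_{j\in S}\sum_{k\in B_{j}}\sum_{m\in R_{jk}} a_{jkm}^{i}\, z_{jkm}\ \ge\ \sum_{s=1}^{N_{i}} s\, y_{is}$ for each $i\in T$, and, by Definition~1, the statement that a column $m'$ of orbit $(j,k)$ is dominated by a column $m$ of the same orbit means exactly that $a_{jkm}^{i}\ge a_{jkm'}^{i}$ for every $i\in T$, with strict inequality for at least one target.

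Next I would take any feasible solution $(x,y,z)$ of the master over a column pool that contains both $m$ and $m'$ with $z_{jkm'}>0$, and build $(\bar x,y,\bar z)$ by setting $\bar z_{jkm'}=0$, $\bar z_{jkm}=z_{jkm}+z_{jkm'}$ and keeping all other $z$- and all $y$-entries as before. Then: (i) constraint \eqref{ZCons} for orbit $(j,k)$ still holds, because the total $z$-weight of that orbit is unchanged; (ii) in each constraint \eqref{SDYCons2} the left-hand side changes by $(a_{jkm}^{i}-a_{jkm'}^{i})\,z_{jkm'}\ge 0$, so it remains valid against the unchanged right-hand side; (iii) the only other master constraint, \eqref{YCons0}, does not involve $z$, and the per-orbit memory and energy limits \eqref{NewMemoryConCG}--\eqref{NewEnergyConCG} are already built into the definition of $R_{jk}$, so replacing one admissible column by another cannot violate them. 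The objective \eqref{NewObjFunc} depends on $y$ only and is therefore unchanged. Iterating this swap over all orbits produces an equally good feasible solution that uses no dominated column, which proves the property.

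Finally, I would remark that the case in which $m'$ is dominated only by a convex combination of several retained columns rather than by a single one is handled identically: split the weight $z_{jkm'}$ among those columns in the convexifying proportions and reuse the estimate in (ii). I expect the one point genuinely worth spelling out is step (ii)/(iii) --- that increasing an observation count can never hurt. This rests on the monotone ``$\ge$'' form of \eqref{SDYCons2} together with the sign restriction ${\theta}^{1}_{i}\le 0$ of \eqref{DualV1}; the latter also yields a dual reading of the property, since it makes the dominating column's reduced cost (the objective \eqref{PriObjFunc}) at least as negative as the dominated column's, so the pricing step would in any case never prefer $m'$ to $m$.
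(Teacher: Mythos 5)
Your proof is correct and rests on the same underlying idea as the paper's justification --- that a dominating column contributes at least as much to the coverage constraints \eqref{SDYCons2} while occupying the same convexity slot in \eqref{ZCons} --- except that the paper offers only a one-sentence informal remark where you give an explicit and rigorous weight-shifting argument that works for both the LP relaxation and the integer master. The only minor caveat is your closing extension to domination by a convex combination of columns: that splitting argument is valid for the LP but would produce fractional $z$ in the integer master, though this case is not required by the paper's pairwise Definition~1 of column dominance.
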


%\begin{prop} Several dominant observation strategies can result in the same scheduling scheme.\end{prop}
The dominated column can be discarded since the dominant column has a higher contribution to objective. A generated column is discarded if it is dominated; otherwise it  is added to the initial column pool. Similarly, %We also check whether current column dominates the existing columns in $R_{jk}$.
if a column currently in the column pool is dominated by a new column then it is also removed.

%\begin{algorithm}[ht]
%\caption{Column initialization heuristic}
%\label{Algorithm1}
%\begin{algorithmic}[1]
%\REQUIRE ~~\\ %算法的输入参数：Input
%Candidate observation missions $\bigcup \limits_{k\in B_{j},j\in S} M_{jk}$ and number of iterations for each satellite orbit $nbI$;\\
%\ENSURE ~~\\ %算法的输出：Output
%Set of orbit schedules $R_{jk}$;\\
%\FOR{each $j \in S$}
%\FOR{each $k \in B_{j}$}
%\STATE $R_{jk} = \emptyset$; \
%\FOR{$1$ to $nbI$}
% \STATE$CurR= \emptyset$;\
%\STATE Randomly reorder the candidate missions of set $M_{jk}$; \
%\FOR{each $p \in M_{jk}$}
%\IF{$CheckCons(CurR, p)$}
%\STATE Update $CurR$; \
%\ENDIF
%\ENDFOR
%\IF{$CheckDominance(R_{jk},CurR)$}
%\STATE  $R_{jk}= R_{jk} \cup \{CurR\}$; \
%\ENDIF
%\ENDFOR
%\ENDFOR
%\ENDFOR
%
%\end{algorithmic}
%\end{algorithm}

\subsection{Label-setting method for pricing}% problems}

The pricing problem for each satellite orbit %require to be solved iteratively. Besides, we observe that the pricing problem,
%searching the column with the most negative reduced cost,
corresponds to a resource-constrained shortest path problem~\citep{pugliese2013survey} in a graph without circuit.  We use an adaptation of a label-setting algorithm~\citep{GabrelVanderpooten-61} to solve this problem.
%Although the number of paths grows exponentially with the number of missions, the candidate observation missions in each orbit are far from being exponential in practice and the number of useful paths can be relatively low.
For each satellite orbit $k \in B_{j}$, we define a directed acyclic graph $G_{jk}=(N_{jk},E_{jk})$, where the vertex set $N_{jk}$ contains all candidate observation missions together with dummy missions $s_{jk}$ and $e_{jk}$.  The edge set $E_{jk}$ contains an edge from mission (node) $p$ to mission (node) $q$ only if $q$ can be executed immediately after $p$; we also include an edge from $s_{jk}$ to each other vertex, and $e_{jk}$ has edges incoming from all other vertices.
%Then the mission path is defined in graph $G_{jk}$ as follows.

\begin{defn} Mission path. In the directed acyclic graph $G_{jk}=(N_{jk},E_{jk})$, a mission path is a path from the dummy source to any observation mission or the dummy sink mission $q \in N_{jk}\setminus\{s_{jk}\}$.  A mission path can be represented as a tuple $P_{jkqt}=(Cost_{jkqt},CurM_{jkqt},CurE_{jkqt})$, where $Cost_{jkqt}$ is  the sum of the mission costs along the path, $CurM_{jkqt}$ and $CurE_{jkqt}$ are the summed memory occupation and energy consumption, respectively, and value $t$ is an index for the mission paths ending at mission $q$ in $G_{jk}$.
\end{defn}

For each observation mission $q$ in graph $G_{jk}$, the mission cost $MisCost_{jkq}$ equals $\theta^{1}_{i}$, where $i$ is the target associated with mission $q$ (see objective function \eqref{PriObjFunc}). The mission memory occupation and energy consumption are denoted as $MisM_{jkq}$ and $MisE_{jkq}$. For the dummy source and sink missions, the mission cost is set as $\theta^{3}_{jk}/2$, and the mission memory and energy consumption are set to 0. Searching a column with the most negative reduced cost in orbit $k\in B_{j}$ now transforms to searching a constrained path with minimal mission cost from source to sink in $G_{jk}$. The resource constraints ensure that the total memory occupation and energy consumption along each path do not exceed the corresponding capacity.

Denote the set of mission paths ending at mission $q$ in $G_{jk}$ as $\mathscr{P}_{jkq}$. Property~\ref{prop:pathdom} below enhances algorithmic efficiency of the label-setting method used to find a minimal-cost path; the details of the method are described in Appendix~\ref{ApenB}.
%;  of mission path along with the related property is then defined to ensure the algorithm efficiency.

\begin{defn} Mission path dominance. In the set $\mathscr{P}_{jkq}$ for a given mission $q$ in $G_{jk}$, mission path $P_{jkq{t_{1}}}$ dominates $P_{jkq{t_{2}}}$ when the following inequalities hold and at least one of the inequalities is strict: 1) $Cost_{jkqt_{1}} \leq Cost_{jkqt_{2}}$; 2) $CurM_{jkqt_{1}} \leq CurM_{jkqt_{2}}$; 3) $CurE_{jkqt_{1}} \leq CurE_{jkqt_{2}}$. \end{defn}

\begin{prop} \label{prop:pathdom} For the pricing problem in a given satellite orbit, a dominated mission path can be discarded without loss of optimality.  \end{prop}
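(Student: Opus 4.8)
The plan is to argue by a standard exchange/extension argument: any feasible path that uses a dominated mission path as a prefix can be transformed into a no-worse feasible path by swapping in the dominating prefix, without disturbing the suffix. First I would fix a satellite orbit $k\in B_{j}$, a mission $q\in N_{jk}\setminus\{s_{jk}\}$, and two mission paths $P_{jkqt_{1}}$ and $P_{jkqt_{2}}$ in $\mathscr{P}_{jkq}$ with $P_{jkqt_{1}}$ dominating $P_{jkqt_{2}}$ in the sense of the preceding definition, i.e.\ $Cost_{jkqt_{1}} \leq Cost_{jkqt_{2}}$, $CurM_{jkqt_{1}} \leq CurM_{jkqt_{2}}$, $CurE_{jkqt_{1}} \leq CurE_{jkqt_{2}}$, with at least one inequality strict. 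Let $\Pi$ be any source-to-sink path in $G_{jk}$ that is feasible for the resource constraints \eqref{NewMemoryConCG}--\eqref{NewEnergyConCG} and whose prefix up to $q$ is exactly $P_{jkqt_{2}}$; write $\Pi = P_{jkqt_{2}} \oplus \sigma$ where $\sigma$ is the suffix from $q$ onward (possibly empty if $q=e_{jk}$).

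Next I would form the replacement path $\Pi' = P_{jkqt_{1}} \oplus \sigma$. This is a valid walk in $G_{jk}$ because both prefixes end at the same node $q$, so $\sigma$ can be appended after either; and since $G_{jk}$ is acyclic and $\sigma$ visits only nodes reachable after $q$ while a mission path to $q$ visits only nodes on a path into $q$, the concatenation is again a simple source-to-sink path (here one uses that pitch/roll timing feasibility of an edge depends only on the two missions it joins, so the edges of $\sigma$ remain admissible). For the resource accounting, the summed memory of $\Pi'$ is $CurM_{jkqt_{1}} + (\text{memory of }\sigma\text{ beyond }q)$, which is at most $CurM_{jkqt_{2}} + (\text{memory of }\sigma\text{ beyond }q) = (\text{memory of }\Pi) \leq M^{C}_{j}$; the identical computation with $CurE$ and $E^{C}_{j}$ shows $\Pi'$ satisfies the energy constraint. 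Hence $\Pi'$ is feasible. Finally, the reduced cost (objective \eqref{PriObjFunc}) of $\Pi'$ is $Cost_{jkqt_{1}} + (\text{cost of }\sigma\text{ beyond }q) \leq Cost_{jkqt_{2}} + (\text{cost of }\sigma\text{ beyond }q) = (\text{cost of }\Pi)$, so $\Pi'$ is no worse than $\Pi$. Therefore, whenever an optimal pricing solution extends $P_{jkqt_{2}}$, there is an optimal solution extending $P_{jkqt_{1}}$ instead, and $P_{jkqt_{2}}$ may be discarded during the label-setting search without losing optimality; an empty suffix ($q=e_{jk}$) is the base case and is covered by the same inequality $Cost_{jkqt_{1}}\leq Cost_{jkqt_{2}}$.

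The one point that needs care — and what I expect to be the main obstacle — is justifying that appending $\sigma$ to the dominating prefix cannot create an infeasibility or a repeated node that was not already present when $\sigma$ followed the dominated prefix. This is where the structure of the problem is essential: edge admissibility in $G_{jk}$ is purely a pairwise timing condition (mission $q$ completes, plus transformation time, before the next mission starts), so it is unaffected by which path led into $q$; and because each $OTW$ corresponds to a distinct, time-ordered interval within a single orbit, a node appearing in the prefix $P_{jkqt_{1}}$ cannot reappear in $\sigma$, so no cycle is introduced. Once this is observed, the remaining steps are the elementary monotonicity inequalities above. The argument is local to a single orbit, which suffices because the pricing problems decompose per orbit.
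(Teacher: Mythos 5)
Your proposal is correct and uses essentially the same exchange argument as the paper: replace the dominated prefix $P_{jkqt_{2}}$ of any feasible source-to-sink path by the dominating prefix $P_{jkqt_{1}}$ while keeping the suffix, and conclude feasibility and no-worse cost from the componentwise inequalities. The extra care you take about edge admissibility being a pairwise timing condition and the time-ordering of nodes preventing repeats is a detail the paper leaves implicit, but it does not change the route of the argument.
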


\begin{proof}
Without loss of generality, we denote the dominant and dominated path as $P_{jkq{t_{1}}}$ and $P_{jkq{t_{2}}}$ in $\mathscr{P}_{jkq}$, respectively. Assume that there exists a shortest path $P_{jke_{jk}t}$ in $G_{jk}$ including path $P_{jkq{t_{2}}}$. If we replace $P_{jkq{t_{2}}}$ by $P_{jkq{t_{1}}}$  while maintaining the subsequent path from $q$ to $e_{jk}$, %. According to the definition of mission path dominance,
we obtain a new feasible path $P_{jke_{jk}t^{\prime}}$ satisfying all resource constraints and with $Cost_{jke_{jk}t^{\prime}} \leq Cost_{jke_{jk}t}$. %The optimality is ensured.
\end{proof}

\section{Computational experiments} \label{sec:comput}

\subsection{Data generation}
Since there is no common benchmark for AEOS scheduling~\citep{liu2017adaptive}, the proposed CG-based framework is tested on a diverse set of realistic instances that is generated as follows. Our satellite configuration is based on China's high-resolution AEOSs \textit{Gaojing-1} (also known as \mbox{\textit{SuperView-1}}) \citep{li2021china}. \textit{Gaojing-1} is a commercial constellation of four remote sensing satellites. Details of the satellites' parameters are provided in Appendix~\ref{ApenA}.

Following~\cite{liu2017adaptive}, the observation targets are randomly distributed worldwide, and in specific interest areas. We consider these two target distributions together in the same instance within 24 hours, with 150 globally distributed targets and several specific interest areas with 50~targets. The number of interest areas varies from 0 to 3, and thus the total number of observation targets is 150, 200, 250, or 300. The desired maximum observation number $N_i$ for each target $i$ is randomly generated from 1 to 5 and the maximal profit $\pi_{iN_{i}}$ for each target is defined as an integer number from 1 to 10. Since the satellites in the \textit{Gaojing-1} constellation have very similar properties, the constraints for each satellite are taken to be identical. The unit-time imaging memory occupation $M^{I}$ is 10 MB per second, while the unit-time imaging \blue{energy} consumption $E^{I}$ and maneuvering energy consumption $E^{M}$ are 500 and 1000 Watt, respectively. The memory capacity~$M^{C}$ is set as 400, 500, or 600 MB\@. The energy capacity $E^{C}$ is considered in electric charge from 30, 40, 50, or 60 kilojoules (kJ). The discretization unit of each $VTW$ is set as two seconds (see Section~\ref{subsubsec:discretization} for more information). With the above settings, the length of each $VTW$ is around 90 seconds, generating about 45 $OTW$s. For each combination of parameter settings, 10 instances are randomly generated.

\subsection{Computational results}

\subsubsection{Experimental setup}
The computational experiments are conducted on a laptop equipped with Intel Core i5-7200 CPU at 2.5 GHz and 8 GB of RAM on a Windows 10 64-bit OS\@. The algorithm is implemented in Visual C++. The LP- and IP-solver is CPLEX 12.6.3 using Concert Technology with four threads on two cores. The time limit for each run is set as 1200 seconds. In the tables below, the columns labelled $opt$ and $time$ contain the number of instances solved to guaranteed optimality out of 10 instances per setting and the average CPU time for the 10 instances in seconds, respectively. Columns $ub$ show the number of instances for which the LP is successfully solved to optimality, providing the upper bound. The entries labelled $gap$ represent the relative gap between the scheduling solution and the LP-bound obtained from the CG-based framework. \blue{This value is computed as ${gap} = \frac{z_{\text{LP}} - z_{\text{CGH}}}{z_{\text{LP}}} \times 100\%$, where $z_{\text{LP}}$ denotes the objective value of the LP-relaxation and $z_{\text{CGH}}$ is the objective value of the integer solution obtained by the CG heuristic.}

\subsubsection{Results for AEOS instances}
We first look into the results of the compact flow formulation solved by CPLEX, which are summarized in Table~\ref{tab:GlobalFlow}. Clearly, the flow formulation  already struggles with the smallest instances, especially when the memory capacity increases. For the instances with 150 globally distributed targets, only 24 out of 120 instances are solved to optimality, and its overall runtime is orders of magnitude higher than for the CG-based framework reported below. Due to this poor computational performance, we will not further include this formulation for comparison on larger instances.

% Table generated by Excel2LaTeX from sheet 'Sheet4'
\begin{table}[!b]
  \centering
  \footnotesize
  \caption{Results of the flow formulation on instances with 150 targets.}
    \begin{tabular}{ccrr}
     \toprule
    $M^{C}$ \blue{(MB)}  & $E^{C}$ \blue{(kJ)} & $opt$ $( /10)$ & $time$ \blue{(s)} \\
    \midrule
    400   & 30   & 0     & 1200.00 \\
          & 40   & 0     & 1200.00 \\
          & 50   & 9     & 566.09 \\
          & 60   & 10    & 255.93 \\
    500   & 30   & 0     & 1200.00 \\
          & 40   & 0     & 1200.00 \\
          & 50   & 0     & 1200.00 \\
          & 60   & 5     & 797.94 \\
    600   & 30   & 0     & 1200.00 \\
          & 40   & 0     & 1200.00 \\
          & 50   & 0     & 1200.00 \\
          & 60   & 0     & 1200.00 \\
           \midrule
    \multicolumn{2}{c}{Overall} & 24    & 1035.00 \\
     \bottomrule
    \end{tabular}%
  \label{tab:GlobalFlow}%
\end{table}%

Next, we report results for the CG-based framework. We denote our algorithm with label-setting pricing solver as CG-LAB, and we compare with CG-CPL in which we call CPLEX to solve the pricing problems. The outcomes are gathered in Tables \ref{tab:TableAgile150} to \ref{tab:TableAgile300}, for instances with number of targets ranging from 150 to 300.

\begin{table}[p]
  \centering
    \footnotesize
  \caption{Computational results of CG-LAB and CG-CPL on instances with 150 targets.}
    \begin{tabular}{ccrrrrrrr}
    \toprule
          &       & \multicolumn{2}{c}{CG-CPL} &       & \multicolumn{2}{c}{CG-LAB} &       &  \\
\cmidrule{3-4}\cmidrule{6-7}    $M^{C}$ \blue{(MB)} & $E^{C}$ \blue{(kJ)}& \multicolumn{1}{c}{$ub$ $( /10)$} & \multicolumn{1}{c}{$time$}\blue{(s)} &       & \multicolumn{1}{c}{$ub$ $( /10)$} & \multicolumn{1}{c}{$time$}\blue{(s)} &       & \multicolumn{1}{c}{$gap$ $(\%)$} \\
 \midrule
    400   & 30   & 10    & 760.24 &       & 10    & 36.70 &       & 2.56 \\
          & 40   & 6     & 966.23 &       & 10    & 37.20 &       & 1.60 \\
          & 50   & 10    & 129.14 &       & 10    & 20.21 &       & 0.84 \\
          & 60   & 10    & 151.98 &       & 10    & 11.29 &       & 0.90 \\
    500   & 30   & 10    & 864.85 &       & 10    & 36.70 &       & 2.56 \\
          & 40   & 3     & 1155.61 &       & 10    & 37.20 &       & 1.60 \\
          & 50   & 2     & 1183.87 &       & 10    & 20.21 &       & 0.84 \\
          & 60   & 10    & 206.12 &       & 10    & 11.29 &       & 0.90 \\
    600   & 30   & 10    & 631.21 &       & 10    & 36.70 &       & 2.56 \\
          & 40   & 3     & 1179.33 &       & 10    & 37.20 &       & 1.60 \\
          & 50   & 0     & 1200.00 &       & 10    & 20.21 &       & 0.84 \\
          & 60   & 0     & 1200.00 &       & 10    & 11.29 &       & 0.90 \\
    \midrule
    \multicolumn{2}{c}{Overall} & 74    & 802.38 &       & 120   & 26.35 &       & 1.48 \\
    \bottomrule
    \end{tabular}%
  \label{tab:TableAgile150}%
\end{table}%

% Table generated by Excel2LaTeX from sheet 'Sheet2'
\begin{table}[p]
  \centering
    \footnotesize
  \caption{Computational results of CG-LAB and CG-CPL on instances with 200 targets.}
    \begin{tabular}{ccrrrrrrr}
    \toprule
          &       & \multicolumn{2}{c}{CG-CPL} &       & \multicolumn{2}{c}{CG-LAB} &       &  \\
\cmidrule{3-4}\cmidrule{6-7}   $M^{C}$ \blue{(MB)} & $E^{C}$ \blue{(kJ)}& \multicolumn{1}{c}{$ub$ $( /10)$} & \multicolumn{1}{c}{$time$}\blue{(s)} &       & \multicolumn{1}{c}{$ub$ $( /10)$} & \multicolumn{1}{c}{$time$}\blue{(s)} &       & \multicolumn{1}{c}{$gap$ $(\%)$} \\
 \midrule
    400   & 30   & 2     & 1187.84 &       & 10    & 78.75 &       & 3.23 \\
          & 40   & 0     & 1200.00 &       & 10    & 81.95 &       & 2.26 \\
          & 50   & 10    & 206.48 &       & 10    & 43.03 &       & 0.98 \\
          & 60   & 10    & 182.45 &       & 10    & 26.32 &       & 0.92 \\
    500   & 30   & 0     & 1200.00 &       & 10    & 74.92 &       & 3.17 \\
          & 40   & 0     & 1200.00 &       & 10    & 93.60 &       & 2.64 \\
          & 50   & 0     & 1200.00 &       & 10    & 109.18 &       & 2.40 \\
          & 60   & 10    & 232.87 &       & 10    & 75.69 &       & 1.81 \\
    600   & 30   & 0     & 1200.00 &       & 10    & 74.43 &       & 3.23 \\
          & 40   & 0     & 1200.00 &       & 10    & 95.02 &       & 2.59 \\
          & 50   & 0     & 1200.00 &       & 10    & 118.69 &       & 2.62 \\
          & 60   & 0     & 1200.00 &       & 10    & 140.14 &       & 2.63 \\
    \midrule
    \multicolumn{2}{c}{Overall} & 32    & 950.80 &       & 120   & 84.31 &       & 2.37 \\
    \bottomrule
    \end{tabular}%
  \label{tab:TableAgile200}%
\end{table}%

% Table generated by Excel2LaTeX from sheet 'Sheet2'
\begin{table}[p]
  \centering
    \footnotesize
  \caption{Computational results of CG-LAB and CG-CPL on instances with 250 targets.}
    \begin{tabular}{ccrrrrrrr}
    \toprule
          &       & \multicolumn{2}{c}{CG-CPL} &       & \multicolumn{2}{c}{CG-LAB} &       &  \\
\cmidrule{3-4}\cmidrule{6-7}    $M^{C}$ \blue{(MB)} & $E^{C}$ \blue{(kJ)}& \multicolumn{1}{c}{$ub$ $( /10)$} & \multicolumn{1}{c}{$time$}\blue{(s)} &       & \multicolumn{1}{c}{$ub$ $( /10)$} & \multicolumn{1}{c}{$time$}\blue{(s)} &       & \multicolumn{1}{c}{$gap$ $(\%)$} \\
 \midrule
    400   & 30   & 0     & 1200.00 &       & 10    & 167.03 &       & 4.47 \\
          & 40   & 0     & 1200.00 &       & 10    & 242.30 &       & 3.53 \\
          & 50   & 7     & 814.75 &       & 10    & 187.30 &       & 2.47 \\
          & 60   & 7     & 735.94 &       & 10    & 172.99 &       & 2.19 \\
    500   & 30   & 0     & 1200.00 &       & 10    & 174.39 &       & 4.38 \\
          & 40   & 0     & 1200.00 &       & 10    & 341.53 &       & 3.49 \\
          & 50   & 0     & 1200.00 &       & 10    & 457.51 &       & 3.94 \\
          & 60   & 7     & 866.69 &       & 10    & 396.45 &       & 3.71 \\
    600   & 30   & 0     & 1200.00 &       & 10    & 176.77 &       & 4.36 \\
          & 40   & 0     & 1200.00 &       & 10    & 333.33 &       & 3.46 \\
          & 50   & 0     & 1200.00 &       & 10    & 467.29 &       & 3.34 \\
          & 60   & 0     & 1200.00 &       & 10    & 537.73 &       & 4.10 \\
    \midrule
    \multicolumn{2}{c}{Overall} & 21    & 1101.45 &       & 120   & 304.55 &       & 3.62 \\
    \bottomrule
    \end{tabular}%
  \label{tab:TableAgile250}%
\end{table}%

% Table generated by Excel2LaTeX from sheet 'Sheet2'
\begin{table}[p]
  \centering
    \footnotesize
  \caption{Computational results of CG-LAB and CG-CPL on instances with 300 targets.}
    \begin{tabular}{ccrrrrrrr}
    \toprule
          &       & \multicolumn{2}{c}{CG-CPL} &       & \multicolumn{2}{c}{CG-LAB} &       &  \\
\cmidrule{3-4}\cmidrule{6-7}    $M^{C}$ \blue{(MB)} & $E^{C}$ \blue{(kJ)}& \multicolumn{1}{c}{$ub$ $( /10)$} & \multicolumn{1}{c}{$time$}\blue{(s)} &       & \multicolumn{1}{c}{$ub$ $( /10)$} & \multicolumn{1}{c}{$time$}\blue{(s)} &       & \multicolumn{1}{c}{$gap$ $(\%)$} \\
 \midrule
    400   & 30   & 0     & 1200.00 &       & 10    & 224.15 &       & 4.08 \\
          & 40   & 0     & 1200.00 &       & 10    & 313.74 &       & 3.47 \\
          & 50   & 3     & 999.46 &       & 10    & 255.57 &       & 2.96 \\
          & 60   & 3     & 980.87 &       & 10    & 249.29 &       & 2.53 \\
    500   & 30   & 0     & 1200.00 &       & 10    & 219.12 &       & 4.52 \\
          & 40   & 0     & 1200.00 &       & 10    & 417.71 &       & 3.76 \\
          & 50   & 0     & 1200.00 &       & 10    & 601.15 &       & 4.57 \\
          & 60   & 2     & 1070.50 &       & 10    & 494.25 &       & 4.29 \\
    600   & 30   & 0     & 1200.00 &       & 10    & 199.31 &       & 4.35 \\
          & 40   & 0     & 1200.00 &       & 10    & 351.59 &       & 3.61 \\
          & 50   & 0     & 1200.00 &       & 9     & 700.49 &       & 3.90 \\
          & 60   & 0     & 1200.00 &       & 9     & 862.59 &       & 4.78 \\
    \midrule
    \multicolumn{2}{c}{Overall} & 8     & 1154.24 &       & 118   & 407.41 &       & 3.90 \\
    \bottomrule
    \end{tabular}%
  \label{tab:TableAgile300}%
\end{table}%

Overall, CG-LAB consistently outperforms CG-CPL, obtaining the upper bound for all instances from 150 to 250 targets and only failing for two instances with 300 targets.  CG-CPL, on the other hand, already starts to experience difficulties on the smallest instances with 150 targets. As the number of targets increases, the performance of CG-CPL becomes worse, in that the LP is usually not solved within the time limit. With 300 targets, CG-CPL can only solve eight out of 120 instances. The cause of CG-CPL's failure is the time required for the pricing problems. %: . It costs longer time for the solver to trace a shortest path compared with the label-setting algorithm.
Although the label-setting pricing solver typically finds a shortest path very efficiently, CG-LAB fails to provide the upper bound on two instances with 300 targets; in these instances the number of mission paths for the pricing problem is close to one million, and the label-setting procedure also runs into difficulties. After the CG-procedure, the integer master problem with all generated columns is easily solved in less than one second in all cases.

As for the quality of the scheduling solutions, the average relative gap of CG-LAB is less than~$3\%$. As the number of targets increases, the gap from the upper bound rises slightly but never exceeds $5\%$ for any instance. It is worth pointing out that the actual optimality gap for the solution produced by CG-LAB is typically smaller than the $gap$ value, while a guaranteed optimal solution cannot be obtained in limited time. %This verifies that the bound obtained by the CG based algorithm is indeed quite tight.

The memory capacity $M^{C}$ significantly influences algorithmic performance. Larger $M^{C}$ provides more possibilities to execute more observation missions, while it also requires more  time to solve the pricing problem for both of the algorithms. For CG-LAB, longer running times  are required as $M^{C}$ increases, but the upper bound is still obtained for most instances, while CG-CPL gets stuck in the pricing problem due to the runtime limitations.

Different energy capacity values $E^{C}$ have different impact on the computational performance. For smaller $E^{C}$, the number of selected observation missions is low and fewer paths are generated, so a shorter running time is therefore needed for solving the pricing problem. When $E^{C}$ increases, on the other hand, a decrease in running time is sometimes also observed when larger $E^{C}$ no longer restrains the number of observation missions, so that the relaxed energy constraint is never binding anymore and thus does not have much impact anymore  on the runtime.

\subsubsection{Results for CEOS instances}
The CG-based framework is developed for scheduling agile satellites, so it can also be applied for planning conventional satellites. By fixing a satellite along the pitch axis, we transform an AEOS into a CEOS\@. The same target distribution and orbital parameters of the satellite constellation are considered.  We compare CG-LAB, CG-CPL and the compact flow formulation FF; the results are represented in Tables~\ref{tab:TableNonAgile150} to~\ref{tab:TableNonAgile300}.  The columns \textit{opt-gap} report the gap between the solution of CG-LAB %/CPL 
and the optimal integer solution obtained by FF, which is always readily available. On the instances with 150, 200, and 250 targets, FF needs less than one second on average for producing an optimal solution. CG-LAB always outperforms CG-CPL because of the pricing solver, and the size of the instance does not have a strong impact on its running time. Even for the largest instance with 300 targets, CG-LAB only requires 3.35 seconds on average, while over 60 seconds are needed for FF.

The solutions produced by the CG-heuristic are near-optimal. Compared with the LP-bound, the gap is less than $1\%$ on average and never exceeds $3\%$ for any instance.  For these conventional instances, we can also compute the actual optimality gap \textit{opt-gap}, which is even more favorable. For the instances with 150 and 200 targets, CG-LAB always finds an optimal solution. For the larger instances, the optimality gap is less than $0.1\%$. Overall, FF can efficiently find an optimal solution for small CEOS instances, while CG-LAB produces near-optimal solutions for larger instances with less runtime than FF.
% Our proposed CG based framework has superior performance for the CEOS scheduling.
% Table generated by Excel2LaTeX from sheet 'TABLE'

\begin{table}[p]
  \centering
    \footnotesize
  \caption{Computational results for CEOS instances with 150 targets.}
    \begin{tabular}{ccccrrr}
    \toprule
          &       & CG-LAB & CG-CPL & \multicolumn{1}{c}{FF} &       &  \\
\cmidrule{3-5}    $M^{C}$ \blue{(MB)} & $E^{C}$ \blue{(kJ)} & $time$ \blue{(s)} & $time$ \blue{(s)} & \multicolumn{1}{c}{$time$}\blue{(s)} & \multicolumn{1}{c}{$gap$ $(\%)$} & \multicolumn{1}{c}{\textit{opt-gap} $(\%)$} \\
    \midrule
    400   & 30   & 2.25  & 9.50  & 0.10  & 0.62  & 0.00 \\
          & 40   & 2.30  & 8.69  & 0.09  & 0.39  & 0.00 \\
          & 50   & 2.34  & 8.31  & 0.06  & 0.25  & 0.00 \\
          & 60   & 2.39  & 7.65  & 0.05  & 0.16  & 0.00 \\
    500   & 30   & 2.31  & 9.49  & 0.09  & 0.62  & 0.00 \\
          & 40   & 2.36  & 9.05  & 0.10  & 0.41  & 0.00 \\
          & 50   & 2.39  & 8.70  & 0.07  & 0.28  & 0.00 \\
          & 60   & 2.37  & 7.65  & 0.05  & 0.16  & 0.00 \\
    600   & 30   & 2.39  & 9.18  & 0.09  & 0.62  & 0.00 \\
          & 40   & 2.35  & 8.52  & 0.10  & 0.41  & 0.00 \\
          & 50   & 2.47  & 9.02  & 0.07  & 0.27  & 0.00 \\
          & 60   & 2.45  & 8.05  & 0.05  & 0.16  & 0.00 \\
    \midrule
    \multicolumn{2}{c}{Overall} & 2.36  & 8.65  & 0.08  & 0.36  & 0.00 \\
    \bottomrule
    \end{tabular}%
  \label{tab:TableNonAgile150}%
\end{table}%

% Table generated by Excel2LaTeX from sheet 'TABLE'
\begin{table}[p]
  \centering
    \footnotesize
  \caption{Computational results for CEOS instances with 200 targets.}
    \begin{tabular}{ccccrrr}
    \toprule
          &       & \multicolumn{1}{c}{CG-LAB} & \multicolumn{1}{c}{CG-CPL} & \multicolumn{1}{c}{FF} &       &  \\
\cmidrule{3-5}    $M^{C}$ \blue{(MB)} & $E^{C}$ \blue{(kJ)} & $time$ \blue{(s)} & $time$ \blue{(s)} & \multicolumn{1}{c}{$time$}\blue{(s)} & \multicolumn{1}{c}{$gap$ $(\%)$} & \multicolumn{1}{c}{\textit{opt-gap} $(\%)$} \\
    \midrule
    400   & 30   & 2.79  & 12.78 & 0.60  & 1.21  & 0.00 \\
          & 40   & 2.75  & 11.78 & 0.46  & 0.69  & 0.00 \\
          & 50   & 2.80  & 10.60 & 0.21  & 0.44  & 0.00 \\
          & 60   & 2.76  & 10.23 & 0.11  & 0.37  & 0.00 \\
    500   & 30   & 2.72  & 12.71 & 0.59  & 1.21  & 0.00 \\
          & 40   & 2.87  & 12.31 & 0.48  & 0.69  & 0.00 \\
          & 50   & 2.96  & 11.14 & 0.29  & 0.44  & 0.00 \\
          & 60   & 2.90  & 10.78 & 0.20  & 0.33  & 0.00 \\
    600   & 30   & 2.71  & 11.61 & 0.59  & 1.21  & 0.00 \\
          & 40   & 0.69  & 11.45 & 0.53  & 2.80  & 0.00 \\
          & 50   & 0.46  & 11.54 & 0.38  & 2.97  & 0.00 \\
          & 60   & 0.34  & 11.15 & 0.26  & 2.98  & 0.00 \\
    \midrule
    \multicolumn{2}{c}{Overall} & 2.23  & 11.51 & 0.39  & 1.28  & 0.00 \\
    \bottomrule
    \end{tabular}%
  \label{tab:TableNonAgile200}%
\end{table}%

% Table generated by Excel2LaTeX from sheet 'TABLE'
\begin{table}[p]
  \centering
    \footnotesize
  \caption{Computational results for CEOS instances with 250 targets.}
    \begin{tabular}{ccccrrr}
    \toprule
          &       & \multicolumn{1}{c}{CG-LAB} & \multicolumn{1}{c}{CG-CPL} & \multicolumn{1}{c}{FF} &       &  \\
\cmidrule{3-5}    $M^{C}$ \blue{(MB)} & $E^{C}$ \blue{(kJ)} & $time$ \blue{(s)} & $time$ \blue{(s)} & \multicolumn{1}{c}{$time$}\blue{(s)} & \multicolumn{1}{c}{$gap$ $(\%)$} & \multicolumn{1}{c}{\textit{opt-gap} $(\%)$} \\
    \midrule
    400   & 30   & 2.53  & 11.92 & 0.75  & 1.25  & 0.00 \\
          & 40   & 2.60  & 12.00 & 0.38  & 0.80  & 0.00 \\
          & 50   & 2.57  & 10.97 & 0.17  & 0.47  & 0.01 \\
          & 60   & 2.65  & 10.49 & 0.11  & 0.44  & 0.01 \\
    500   & 30   & 2.60  & 12.94 & 0.55  & 1.27  & 0.00 \\
          & 40   & 2.78  & 12.97 & 0.45  & 0.83  & 0.00 \\
          & 50   & 2.72  & 12.51 & 0.19  & 0.54  & 0.00 \\
          & 60   & 2.80  & 10.97 & 0.14  & 0.36  & 0.01 \\
    600   & 30   & 2.62  & 12.72 & 0.69  & 1.27  & 0.01 \\
          & 40   & 2.74  & 15.62 & 0.49  & 0.84  & 0.00 \\
          & 50   & 2.75  & 12.02 & 0.23  & 0.57  & 0.00 \\
          & 60   & 2.84  & 11.60 & 0.15  & 0.39  & 0.00 \\
    \midrule
    \multicolumn{2}{c}{Overall} & 2.68  & 12.23 & 0.36  & 0.72  & 0.00 \\
    \bottomrule
    \end{tabular}%
  \label{tab:TableNonAgile250}%
\end{table}%

% Table generated by Excel2LaTeX from sheet 'TABLE'
\begin{table}[p]
  \centering
    \footnotesize
  \caption{Computational results for CEOS instances with 300 targets.}
    \begin{tabular}{ccccrrr}
    \toprule
          &       & \multicolumn{1}{c}{CG-LAB} & \multicolumn{1}{c}{CG-CPL} & \multicolumn{1}{c}{FF} &       &  \\
\cmidrule{3-5}    $M^{C}$ \blue{(MB)} & $E^{C}$ \blue{(kJ)} & $time$ \blue{(s)} & $time$ \blue{(s)} & \multicolumn{1}{c}{$time$} \blue{(s)} & \multicolumn{1}{c}{$gap$ $(\%)$} & \multicolumn{1}{c}{\textit{opt-gap} $(\%)$} \\
    \midrule
    400   & 30   & 3.16  & 15.03 & 106.88 & 1.52  & 0.05 \\
          & 40   & 3.31  & 13.94 & 8.99  & 0.93  & 0.11 \\
          & 50   & 3.15  & 12.50 & 1.67  & 0.54  & 0.08 \\
          & 60   & 3.21  & 12.37 & 0.61  & 0.45  & 0.08 \\
    500   & 30   & 3.20  & 14.80 & 123.11 & 1.57  & 0.03 \\
          & 40   & 3.35  & 14.93 & 122.09 & 1.14  & 0.05 \\
          & 50   & 3.35  & 13.87 & 9.94  & 0.63  & 0.11 \\
          & 60   & 3.44  & 13.85 & 1.73  & 0.43  & 0.10 \\
    600   & 30   & 3.16  & 13.88 & 123.48 & 1.64  & 0.11 \\
          & 40   & 3.44  & 16.39 & 122.16 & 1.18  & 0.07 \\
          & 50   & 3.66  & 16.08 & 100.54 & 0.77  & 0.08 \\
          & 60   & 3.73  & 14.89 & 36.04 & 0.52  & 0.07 \\
    \midrule
    \multicolumn{2}{c}{Overall} & 3.35  & 14.38 & 63.10 & 0.84  & 0.08 \\
    \bottomrule
    \end{tabular}%
  \label{tab:TableNonAgile300}%
\end{table}%

\subsubsection{Sensitivity analysis} \label{subsubsec:discretization}

The discretization unit (the time between the start of two successive $OTW$s) of the $VTW$s is a key parameter of the model. In this section, we examine its influence on the performance of the proposed CG-based framework. We generate ten AEOS instances of 200 targets, combining 150 globally distributed targets and 50 targets in one interest area. The memory and energy capacity are fixed as 500 MB and 50 kJ, respectively. The simulation results for different discretization units are reported in Table~\ref{tab:SensiAnaly}, where column $dt$ represents the discretization unit in seconds and $|M|$ stands for the number of generated candidate observation missions. The columns $gap$ and $time$ contain the average value per setting only for the instances for which we obtain an optimal LP-solution.

% Table generated by Excel2LaTeX from sheet 'Sheet3'
\begin{table}[htbp]
	\centering
	  \footnotesize
	\caption{Performance of CG-LAB with different discretization units.}
	\begin{tabular}{rrrrr}
		\toprule
		\multicolumn{1}{c}{$dt$} \blue{(s)} & \multicolumn{1}{c}{$|M|$} & \multicolumn{1}{c}{$ub$ $( /10)$} & \multicolumn{1}{c}{$gap$ $(\%)$} & \multicolumn{1}{c}{$time$}\blue{(s)} \\
		\midrule
		1     & 22999 & 7     & 2.09  & 314.89 \\
		1.5   & 16695 & 10    & 2.28  & 203.09 \\
		2     & 12980 & 10    & 2.51  & 117.27 \\
		5     & 5242  & 10    & 2.25  & 30.66 \\
		10    & 2682  & 10    & 2.31  & 13.96 \\
		15    & 1828  & 10    & 1.48  & 9.15 \\
		20    & 1396  & 10    & 1.96  & 7.94 \\
		\bottomrule
	\end{tabular}%
	\label{tab:SensiAnaly}%
\end{table}%

The number of generated missions clearly rises as $dt$ decreases. When $dt = 1$ second, we cannot obtain the LP-bound for three out of the 10 instances because of an excessive number of missions. The gap from the LP-bound, on the other hand, is not very sensitive to $dt$, and is always below~$3\%$. The value of $dt$ does have a very serious impact on the CPU time, with lower choices for $dt$ obviously leading to longer runtimes. The FF model cannot find an optimal solution within the time limit even when $dt = 20$ seconds. In conclusion, the user should carefully select a proper value of $dt$ for the proposed model, in order to strike a balance between fine discretization and running time.
 %Therefore $Dt$ is set as 2 seconds in the above section of experiment setup.

\section{Conclusions}
\label{sec:conclusion}
In this article, we have studied the scheduling of observations by multiple agile satellites and with the possibility of conducting multiple observations for the same target. We aim to maximize the entire observation profit, with a profit function per target that can be nonlinear in the number of scheduled missions. We describe a linear flow-based formulation, which can handle sequence-dependent constraints but whose computational performance turns out to be limited.

We also develop a CG-based framework to solve a decomposition of  the flow formulation. A label-setting algorithm is introduced for solving the pricing problems. We compare the flow model and two CG-heuristics, in which the pricing problem is solved by a label-setting algorithm and by an IP-solver, respectively. The computational results indicate that the flow formulation efficiently obtains optimal solutions on small instances for conventional satellite scheduling, while the CG-heuristic has superior performance for scheduling agile satellites, and the dedicated pricing solver is clearly better than the IP-solver for pricing. The proposed CG-framework also provides a tight upper bound that allows to effectively evaluate the quality of heuristic solutions.

A topic for future research can be to adjust the objective function to incorporate completion times for user-required missions, since rapid response plays a very important role in many observation missions, for instance in case of natural disasters. \blue{In addition, future work could focus on incorporating further operational factors such as cloud uncertainty and telemetry, tracking, and command scheduling, as well as modeling data transmission limitations more accurately. One possible approach is to represent downlink opportunities as dynamic time-dependent resource constraints~\citep{xiang2024hierarchical}. These extensions will further improve the realism and applicability of the proposed framework.}

% A topic for future research can be to adjust the objective function to incorporate completion times for user-required missions, since rapid response plays a very important role in many observation missions, for instance in case of natural disasters.  A more accurate modelling of data transmission constraints is also an avenue for further research. % Finally, another planning complexity that has recently received some research attention but which deserves to be further explored, is related to the anticipation of uncertainty, for instance regarding cloud coverage. % Other opportunities for adding more constraints into the model, such as data transmission, into the AEOS scheduling are legion.

% \section*{Acknowledgment}
% This research was supported by the China Scholarship Council and the Academic Excellence Foundation of BUAA for PhD-students.

\begin{appendices}
%\section*{Appendix}
%\setcounter{subsection}{0}

\section{The label-setting procedure}
\label{ApenB}
Following~\citet{GabrelVanderpooten-61}, we apply a label-setting method for the pricing problems. Since each edge in each graph $G_{jk}$ leads to a mission with a strictly larger starting time than its origin node, we order the missions $M_{jk}$ in ascending observation starting time so that all paths are explored by the loop between lines \ref{L31} and \ref{L32} in Algorithm~\ref{Algorithm3}.

For each mission $q$ in $G_{jk}$, we denote the set of possible predecessors as $M^{-q}_{jk}$ and employ function $GetCurPreMisIndex()$ to obtain the mission index of the current predecessor. Each path associated with mission $r$ in $\mathscr{P}_{jkr}$ is extended by adding the current $MisCost_{jkq}$, $MisM_{jkq}$ and $MisE_{jkq}$. Function $CheckConstraints()$ returns $true$ when $NewP$ satisfies all constraints and returns $false$ otherwise. Function $CheckPathDominance()$ returns $true$ if $NewP$ is not dominated by any paths in $\mathscr{P}_{jkq}$ and returns $false$ otherwise. Subsequently, $NewP$ can be added into $\mathscr{P}_{jkq}$ if $NewP$ is not dominated. We also check whether $NewP$ dominates any paths in $\mathscr{P}_{jkq}$, which can then be removed. Eventually, we obtain a path from source to sink with minimal cost.
%If the minimal cost is less than 0, a column is generated according to the missions along the shortest path.

{The worst-case complexity of Algorithm~\ref{Algorithm3} is exponential}%, where $n$ is the number of vertices in the graph and exponential runtimes are unavoidable
, knowing that the resource-constrained shortest path problem is NP-hard even with one resource \citep{MehlhornZiegelmann}. In practice, however, the number of dominant paths may be relatively low thanks to the resource constraints and dominance criteria \citep{GabrelVanderpooten-61}, and the foregoing procedure empirically turns out to be a computationally viable alternative. % $\mathscr{P}_{jke_{jk}}$ is in general tractable.

\begin{algorithm}[hbtp]
	\caption{The label-setting procedure}
	\label{Algorithm3}
	\begin{algorithmic}[1]
		\REQUIRE  %算法的输入参数：Input
		candidate observation missions set $M_{jk}$ ordered in ascending observation starting time, possible predecessor set $M^{-q}_{jk}$ for each mission and other associated parameters;\\
		\ENSURE  %算法的输出：Output
		%a minimal-cost path in $\mathscr{P}_{jke_{jk}}$; %
%        minimum of objective function \eqref{PriObjFunc};
   an optimal solution to model \eqref{PriObjFunc}--\eqref{NewEnergyConCG};
        %\STATE Order missions $M_{jk}$ with ascending observation starting time;\
		\FOR{each mission $q \in M_{jk} \cup
			\{e_{jk}\}$}
		\label{L31}
		\STATE $P_{jkq} = \emptyset;$ \
		\FOR{each possible predecessor mission in $M^{-q}_{jk}$}
		\STATE $r = GetCurPreMisIndex();$\
		\FOR{each $P_{jkrt^{\prime}} \in \mathscr{P}_{jkr}$ }
		\STATE$NewP = (Cost_{jkrt^{\prime}}+MisCost_{jkq},CurM_{jkrt^{\prime}} + MisM_{jkq},CurE_{jkrt^{\prime}}+MisE_{jkq})$;\
		\IF{ $CheckConstraints(NewP)$}
		\IF{ $CheckPathDominance(\mathscr{P}_{jkq},NewP)$}
		\STATE$ \mathscr{P}_{jkq} = \mathscr{P}_{jkq} \cup \{NewP\};$\
		\ENDIF
		\ENDIF
		\ENDFOR
		\ENDFOR
		\ENDFOR
		\label{L32}
		\RETURN  a minimal-cost path in $\mathscr{P}_{jke_{jk}}$
	\end{algorithmic}
\end{algorithm}

\section{Satellite parameters}
\label{ApenA}
Table~\ref{tab:SatParameters} contains the following details for the four satellites in \textit{Gaojing-1}.  The first column ID indicates the name of satellite, and the parameters in the remaining columns are the satellite's semi-major axis $a$, inclination $i$, right ascension of the ascending node $\Omega$, eccentricity $e$, argument of perigee $\omega$ and mean anomaly $M$, respectively. In addition, the agile platform of the constellation allows up to $30^{\circ}$ maneuvers along both the roll and the pitch axis.

\begin{table}[H]
	\caption{Orbital parameters of the satellite constellation \textit{Gaojing-1}.}
	\centering
	\label{tab:SatParameters}
\begin{tabular}{ccccccc}
  \hline
  % after \\: \hline or \cline{col1-col2} \cline{col3-col4} ...
  \toprule
    ID & $a$ (km) & $i$ ($^{\circ}$)  & $\Omega$ ($^{\circ}$) & $e$ & $\omega$ ($^{\circ}$) & $M$ ($^{\circ}$) \\
    \midrule
  $Sat1$ & 6903.673 & 97.5839 & 97.8446 & 0.0016546 & 50.5083 & 2.0288 \\
  $Sat2$ & 6903.730 & 97.5310 & 95.1761 & 0.0015583 & 52.2620 & 31.4501 \\
  $Sat3$ & 6909.065 & 97.5840 & 93.1999 & 0.0009966 & 254.4613 & 155.2256 \\
  $Sat4$ & 6898.602 & 97.5825 & 92.3563 & 0.0014595 & 276.7332 & 140.1878 \\
  \bottomrule
\end{tabular}
\end{table}

\end{appendices}

\bibliographystyle{apalike}
% \bibliography{Main}
\bibliography{SatScheduling-final}

\begin{thebibliography}{}

\bibitem[Baek et~al., 2011]{Han-162}
Baek, S.-W., Han, S.-M., Cho, K.-R., Lee, D.-W., Yang, J.-S., Bainum, P.~M., and Kim, H.-D. (2011).
\newblock Development of a scheduling algorithm and {GUI} for autonomous satellite missions.
\newblock {\em Acta Astronautica}, 68(7-8):1396--1402.

\bibitem[Barnhart et~al., 1998]{Bar1998}
Barnhart, C., Johnson, E.~L., Nemhauser, G.~L., Savelsbergh, M. W.~P., and Vance, P.~H. (1998).
\newblock Branch-and-price: Column generation for solving huge integer programs.
\newblock {\em Operations Research}, 46(3):316--329.

\bibitem[Benoist and Beno, 2004]{Benoist-104}
Benoist, T. and Beno, R. (2004).
\newblock Upper bounds for revenue maximization in a satellite scheduling problem.
\newblock {\em 40R, Quarterly Journal of the Belgian, French and Italian Operations Research Societies}, 2(3):235--249.

\bibitem[Bensana et~al., 1996]{Bensana96exactand}
Bensana, E., Verfaillie, G., Agnese, J., Bataille, N., and Blumstein, D. (1996).
\newblock Exact and inexact methods for the daily management of an earth observation satellite.
\newblock In {\em Proceeding of the international symposium on space mission operations and ground data systems}, volume~4, pages 507--514.

\bibitem[Chatterjee and Tharmarasa, 2024]{chatterjee2024multi}
Chatterjee, A. and Tharmarasa, R. (2024).
\newblock Multi-stage optimization framework of satellite scheduling for large areas of interest.
\newblock {\em Advances in Space Research}, 73(3).

\bibitem[Chen et~al., 2025]{chen2025conflict}
Chen, X., Tian, T., Dai, G., Wang, M., Song, Z., Zheng, W., and Zhou, Q. (2025).
\newblock A conflict rainbow {DQN}-based two-stage optimization framework for multiple agile satellites scheduling.
\newblock {\em IEEE Transactions on Aerospace and Electronic Systems}, 61(3):7251--7263.

\bibitem[Chern et~al., 2008]{chern2008taiwan}
Chern, J.-S., Ling, J., and Weng, S.-L. (2008).
\newblock Taiwan's second remote sensing satellite.
\newblock {\em Acta Astronautica}, 63(11-12):1305--1311.

\bibitem[Chu et~al., 2017]{chu2017anytime}
Chu, X., Chen, Y., and Tan, Y. (2017).
\newblock An anytime branch and bound algorithm for agile earth observation satellite onboard scheduling.
\newblock {\em Advances in Space Research}, 60(9):2077--2090.

\bibitem[Cordeau and Laporte, 2005]{Cordeau2005}
Cordeau, J.-F. and Laporte, G. (2005).
\newblock Maximizing the value of an earth observation satellite orbit.
\newblock {\em Journal of the Operational Research Society}, 56(8):962--968.

\bibitem[Dagras et~al., 1995]{dagras1995spot}
Dagras, C., Duran, M., Zarrouati, O., and Fratter, C. (1995).
\newblock The {SPOT-5} mission.
\newblock {\em Acta Astronautica}, 35(9-11):651--660.

\bibitem[Furini et~al., 2012]{FURINI2012251}
Furini, F., Malaguti, E., Durán, R.~M., Persiani, A., and Toth, P. (2012).
\newblock A column generation heuristic for the two-dimensional two-staged guillotine cutting stock problem with multiple stock size.
\newblock {\em European Journal of Operational Research}, 218(1):251--260.

\bibitem[Gabrel et~al., 1997]{GabrelMoulet-258}
Gabrel, V., Moulet, A., Murat, C., and Paschos, V.~T. (1997).
\newblock A new single model and derived algorithms for the satellite shot planning problem using graph theory concepts.
\newblock {\em Annals of Operations Research}, 69:115--134.

\bibitem[Gabrel and Vanderpooten, 2002]{GabrelVanderpooten-61}
Gabrel, V. and Vanderpooten, D. (2002).
\newblock Enumeration and interactive selection of efficient paths in a multiple criteria graph for scheduling an earth observing satellite.
\newblock {\em European Journal of Operational Research}, 139(3):533--542.

\bibitem[Gilmore and Gomory, 1961]{1961Gilmore}
Gilmore, P.~C. and Gomory, R.~E. (1961).
\newblock A linear programming approach to the cutting-stock problem.
\newblock {\em Operations Research}, 9(6):849--859.

\bibitem[Gilmore and Gomory, 1963]{1963Gilmore}
Gilmore, P.~C. and Gomory, R.~E. (1963).
\newblock A linear programming approach to the cutting stock problem—part ii.
\newblock {\em Operations Research}, 11(6):863--888.

\bibitem[Gschwind et~al., 2018]{GSCHWIND2018521}
Gschwind, T., Irnich, S., Rothenb\"{u}cher, A.-K., and Tilk, C. (2018).
\newblock Bidirectional labeling in column-generation algorithms for pickup-and-delivery problems.
\newblock {\em European Journal of Operational Research}, 266(2):521--530.

\bibitem[Guedes and Borenstein, 2015]{GUEDES2015361}
Guedes, P.~C. and Borenstein, D. (2015).
\newblock Column generation based heuristic framework for the multiple-depot vehicle type scheduling problem.
\newblock {\em Computers \& Industrial Engineering}, 90:361--370.

\bibitem[Habet et~al., 2010]{HabetVasquez52}
Habet, D., Vasquez, M., and Vimont, Y. (2010).
\newblock Bounding the optimum for the problem of scheduling the photographs of an agile earth observing satellite.
\newblock {\em Computational Optimization and Applications}, 47(2):307--333.

\bibitem[Han et~al., 2022]{han2022simulated}
Han, C., Gu, Y., Wu, G., and Wang, X. (2022).
\newblock Simulated annealing-based heuristic for multiple agile satellites scheduling under cloud coverage uncertainty.
\newblock {\em IEEE Transactions on Systems, Man, and Cybernetics: Systems}, 53(5):2863--2874.

\bibitem[He et~al., 2018]{he2018improved}
He, L., Liu, X., Laporte, G., Chen, Y., and Chen, Y. (2018).
\newblock An improved adaptive large neighborhood search algorithm for multiple agile satellites scheduling.
\newblock {\em Computers \& Operations Research}, 100:12--25.

\bibitem[Herrmann and Schaub, 2023]{herrmann2023reinforcement}
Herrmann, A. and Schaub, H. (2023).
\newblock Reinforcement learning for the agile earth-observing satellite scheduling problem.
\newblock {\em IEEE Transactions on Aerospace and Electronic Systems}, 59(5):5235--5247.

\bibitem[Herrmann et~al., 2024]{herrmann2024single}
Herrmann, A., Stephenson, M.~A., and Schaub, H. (2024).
\newblock Single-agent reinforcement learning for scalable earth-observing satellite constellation operations.
\newblock {\em Journal of Spacecraft and Rockets}, 61(1):114--132.

\bibitem[Kim and Chang, 2015]{KimChang-103}
Kim, H. and Chang, Y.~K. (2015).
\newblock Mission scheduling optimization of sar satellite constellation for minimizing system response time.
\newblock {\em Aerospace Science and Technology}, 40:17--32.

\bibitem[Kolici et~al., 2013]{2013Kolici}
Kolici, V., Herrero, X., Xhafa, F., and Barolli, L. (2013).
\newblock Local search and genetic algorithms for satellite scheduling problems.
\newblock In {\em 2013 Eighth International Conference on Broadband and Wireless Computing, Communication and Applications}, pages 328--335.

\bibitem[Lema\^{i}tre et~al., 2002]{LemaitreVerfaillie-269}
Lema\^{i}tre, M., Verfaillie, G., Jouhaud, F., Lachiver, J.-M., and Bataille, N. (2002).
\newblock Selecting and scheduling observations of agile satellites.
\newblock {\em Aerospace Science and Technology}, 6(5):367--381.

\bibitem[Li et~al., 2021]{li2021china}
Li, D., Wang, M., and Jiang, J. (2021).
\newblock China’s high-resolution optical remote sensing satellites and their mapping applications.
\newblock {\em Geo-spatial Information Science}, 24(1):85--94.

\bibitem[Li et~al., 2024]{li2024earth}
Li, J., Zhu, J., Xu, D., Wang, J., Li, Z., and Zhang, K. (2024).
\newblock Earth observation satellite scheduling with interval-varying profits.
\newblock {\em IEEE Transactions on Aerospace and Electronic Systems}, 60(6):8273--8288.

\bibitem[Lin et~al., 2005]{LinLiao-60}
Lin, W.~C., Liao, D.~Y., Liu, C.~Y., and Lee, Y.~Y. (2005).
\newblock Daily imaging scheduling of an earth observation satellite.
\newblock {\em IEEE Transactions on Systems, Man, and Cybernetics - Part A: Systems and Humans}, 35(2):213--223.

\bibitem[Liu et~al., 2017]{liu2017adaptive}
Liu, X., Laporte, G., Chen, Y., and He, R. (2017).
\newblock An adaptive large neighborhood search metaheuristic for agile satellite scheduling with time-dependent transition time.
\newblock {\em Computers \& Operations Research}, 86:41--53.

\bibitem[Long et~al., 2024a]{long2024improved}
Long, X., Cai, W., Yang, L., and Huang, H. (2024a).
\newblock Improved particle swarm optimization with reverse learning and neighbor adjustment for space surveillance network task scheduling.
\newblock {\em Swarm and Evolutionary Computation}, 85:101482.

\bibitem[Long et~al., 2024b]{long2024emergency}
Long, X., Yang, L., and Qiao, C. (2024b).
\newblock Emergency scheduling based on event triggering and multi-hierarchical planning for space surveillance network.
\newblock {\em Information Sciences}, 667:120486.

\bibitem[Martin, 2012]{martin2012large}
Martin, R.~K. (2012).
\newblock {\em Large Scale Linear and Integer Optimization: A Unified Approach}.
\newblock Springer Science \& Business Media.

\bibitem[Mehlhorn and Ziegelmann, 2000]{MehlhornZiegelmann}
Mehlhorn, K. and Ziegelmann, M. (2000).
\newblock Resource constrained shortest paths.
\newblock In {\em European Symposium on Algorithms}, pages 326--337. Springer.

\bibitem[Miralles et~al., 2023]{miralles2023critical}
Miralles, P., Thangavel, K., Scannapieco, A.~F., Jagadam, N., Baranwal, P., Faldu, B., Abhang, R., Bhatia, S., Bonnart, S., Bhatnagar, I., et~al. (2023).
\newblock A critical review on the state-of-the-art and future prospects of machine learning for earth observation operations.
\newblock {\em Advances in Space Research}, 71(12):4959--4986.

\bibitem[Nag et~al., 2014]{2014Nag}
Nag, S., LeMoigne, J., and de~Weck, O. (2014).
\newblock Cost and risk analysis of small satellite constellations for earth observation.
\newblock In {\em 2014 IEEE Aerospace Conference}, pages 1--16.

\bibitem[Nichol et~al., 2006]{NicholShaker-105}
Nichol, J.~E., Shaker, A., and Wong, M.-S. (2006).
\newblock Application of high-resolution stereo satellite images to detailed landslide hazard assessment.
\newblock {\em Geomorphology}, 76(1-2):68--75.

\bibitem[Peng et~al., 2020]{peng2020exact}
Peng, G., Song, G., Xing, L., Gunawan, A., and Vansteenwegen, P. (2020).
\newblock An exact algorithm for agile earth observation satellite scheduling with time-dependent profits.
\newblock {\em Computers \& Operations Research}, 120:104946.

\bibitem[Pugliese and Guerriero, 2013]{pugliese2013survey}
Pugliese, L. D.~P. and Guerriero, F. (2013).
\newblock A survey of resource constrained shortest path problems: Exact solution approaches.
\newblock {\em Networks}, 62(3):183--200.

\bibitem[Song et~al., 2024]{song2024generalized}
Song, Y., Ou, J., Pedrycz, W., Suganthan, P.~N., Wang, X., Xing, L., and Zhang, Y. (2024).
\newblock Generalized model and deep reinforcement learning-based evolutionary method for multitype satellite observation scheduling.
\newblock {\em IEEE Transactions on Systems, Man, and Cybernetics: Systems}, 54(4):2576--2589.

\bibitem[Song et~al., 2023]{song2023rl}
Song, Y., Wei, L., Yang, Q., Wu, J., Xing, L., and Chen, Y. (2023).
\newblock {RL-GA}: A reinforcement learning-based genetic algorithm for electromagnetic detection satellite scheduling problem.
\newblock {\em Swarm and Evolutionary Computation}, 77:101236.

\bibitem[Stearns and Hamilton, 2007]{StearnsHamilton-106}
Stearns, L.~A. and Hamilton, G.~S. (2007).
\newblock Rapid volume loss from two east greenland outlet glaciers quantified using repeat stereo satellite imagery.
\newblock {\em Geophysical Research Letters}, 34(5).

\bibitem[Stephenson and Schaub, 2025]{stephenson2025optimal}
Stephenson, M.~A. and Schaub, H. (2025).
\newblock Optimal agile satellite target scheduling with learned dynamics.
\newblock {\em Journal of Spacecraft and Rockets}, 62(3):793--804.

\bibitem[Sun et~al., 2008]{Sun2008}
Sun, B., Wang, W., and Qi, Q. (2008).
\newblock Satellites scheduling algorithm based on dynamic constraint satisfaction problem.
\newblock In {\em 2008 International Conference on Computer Science and Software Engineering}, volume~4, pages 167--170.

\bibitem[Tangpattanakul et~al., 2015]{TangpattanakulJozefowiez10}
Tangpattanakul, P., Jozefowiez, N., and Lopez, P. (2015).
\newblock A multi-objective local search heuristic for scheduling earth observations taken by an agile satellite.
\newblock {\em European Journal of Operational Research}, 245(2):542--554.

\bibitem[Vasquez and Hao, 2001]{VasquezHao-255}
Vasquez, M. and Hao, J.-K. (2001).
\newblock A “logic-constrained” knapsack formulation and a tabu algorithm for the daily photograph scheduling of an earth observation satellite.
\newblock {\em Computational Optimization and Applications}, 20(2):137--157.

\bibitem[Vasquez and Hao, 2003]{VasquezHao-271}
Vasquez, M. and Hao, J.-K. (2003).
\newblock Upper bounds for the {SPOT} 5 daily photograph scheduling problem.
\newblock {\em Journal of Combinatorial Optimization}, 7(1):87--103.

\bibitem[Wang et~al., 2016a]{WangDemeulemeester-4}
Wang, J., Demeulemeester, E., and Qiu, D. (2016a).
\newblock A pure proactive scheduling algorithm for multiple earth observation satellites under uncertainties of clouds.
\newblock {\em Computers \& Operations Research}, 74:1--13.

\bibitem[Wang et~al., 2016b]{WangChen-7}
Wang, X., Chen, Z., and Han, C. (2016b).
\newblock Scheduling for single agile satellite, redundant targets problem using complex networks theory.
\newblock {\em Chaos, Solitons \& Fractals}, 83:125--132.

\bibitem[Wang et~al., 2021]{wang2021robust}
Wang, X., Gu, Y., Wu, G., and Woodward, J.~R. (2021).
\newblock Robust scheduling for multiple agile earth observation satellites under cloud coverage uncertainty.
\newblock {\em Computers \& Industrial Engineering}, 156:107292.

\bibitem[Wang et~al., 2019]{wang2019robust}
Wang, X., Song, G., Leus, R., and Han, C. (2019).
\newblock Robust earth observation satellite scheduling with uncertainty of cloud coverage.
\newblock {\em IEEE Transactions on Aerospace and Electronic Systems}, 56(3):2450--2461.

\bibitem[Wang et~al., 2020]{wang2020agile}
Wang, X., Wu, G., Xing, L., and Pedrycz, W. (2020).
\newblock Agile earth observation satellite scheduling over 20 years: Formulations, methods, and future directions.
\newblock {\em IEEE Systems Journal}, 15(3):3881--3892.

\bibitem[Wertz, 1978]{wertz1978spacecraft}
Wertz, J. (1978).
\newblock {\em Spacecraft Attitude Determination and Control}.
\newblock Astrophysics and Space Science Library. Springer Netherlands.

\bibitem[Wilhelm, 2001]{Wilhelm2001}
Wilhelm, W.~E. (2001).
\newblock A technical review of column generation in integer programming.
\newblock {\em Optimization and Engineering}, 2(2):159--200.

\bibitem[Wolfe and Sorensen, 2000]{WolfeSorensen-272}
Wolfe, W.~J. and Sorensen, S.~E. (2000).
\newblock Three scheduling algorithms applied to the earth observing systems domain.
\newblock {\em Management Science}, 46(1):148--166.

\bibitem[Wu et~al., 2022]{wu2022ensemble}
Wu, G., Luo, Q., Du, X., Chen, Y., Suganthan, P.~N., and Wang, X. (2022).
\newblock Ensemble of metaheuristic and exact algorithm based on the divide-and-conquer framework for multisatellite observation scheduling.
\newblock {\em IEEE Transactions on Aerospace and Electronic Systems}, 58(5):4396--4408.

\bibitem[Xiang et~al., 2024]{xiang2024hierarchical}
Xiang, Z., Gu, Y., Wang, X., and Wu, G. (2024).
\newblock Hierarchical disturbance propagation mechanism and improved contract net protocol for satellite {TT\&C} resource dynamic scheduling.
\newblock {\em Complex System Modeling and Simulation}, 4(2):166--183.

\bibitem[Xu et~al., 2016]{XuChen-5}
Xu, R., Chen, H., Liang, X., and Wang, H. (2016).
\newblock Priority-based constructive algorithms for scheduling agile earth observation satellites with total priority maximization.
\newblock {\em Expert Systems with Applications}, 51:195--206.

\bibitem[Zhang and Xing, 2022]{zhang2022improved}
Zhang, J. and Xing, L. (2022).
\newblock An improved genetic algorithm for the integrated satellite imaging and data transmission scheduling problem.
\newblock {\em Computers \& Operations Research}, 139:105626.

\end{thebibliography}
%\bibliography{CG}
\end{document}